\newcommand{\Qomit}[1]{}
\newcommand{\Xomit}[1]{}
\DeclareMathSymbol{\R}{\mathbin}{AMSb}{"52}
\definecolor{webgreen}{rgb}{0,0.4,0}
\definecolor{webbrown}{rgb}{0.6,0,0}
\definecolor{purple}{rgb}{0.5,0,0.25}
\definecolor{darkblue}{rgb}{0,0,0.7}
\definecolor{darkred}{rgb}{0.7,0,0}
\definecolor{darkgreen}{rgb}{0,0.7,0}
\newcommand{\ignore}[1]{}
\newtheorem{lm}{{\sc Lemma}}
\newtheorem{prop}{{\sc Proposition}}
\newtheorem{corollary}{{\sc Corollary}}
\newtheorem{tm}{{\sc Theorem}}
\newtheorem{df}{{\sc Definition}}
\newtheorem{example}{{\sc Example}}
\begin{document} 
 
\begin{titlepage}
\title{ \large{\bf A Belief-Based Characterization of Reduced-Form Auctions}\thanks{The present paper is a substantially revised and extended version of \cite{LA22}. I thank  Eric van Damme,   Debasis Mishra, Rakesh Vohra for their comments
on an earlier draft, and seminar participants for their helpful feedback on several occasions.} }
\author{Xu Lang\thanks{Southwest University of Finance and Economics ({\it langxu@swufe.edu.cn}).} }
 \date{June 27, 2023}
\maketitle

\begin{abstract}
We study games of chance (e.g., pokers, dices, horse races) in the form of agents' first-order posterior beliefs about game outcomes. We ask for any profile of agents' posterior beliefs, is there a game that can generate these beliefs? We  completely characterize all feasible joint posterior beliefs from these games. The characterization enables us to find  a new variant of Border's inequalities  \citep{BO91}, which we call  a belief-based  characterization of Border's inequalities. It also leads to a generalization of Aumann's Agreement Theorem. We show that the characterization results  are powerful in bounding  the correlation of agents' joint posterior beliefs.

\bigskip
\noindent JEL Classification number: D7, D8
\medskip

\noindent Keywords: Games of Chance,  Posterior Beliefs,  Reduced Form Auctions,    Aumann's Agreement Theorem, Bayesian Persuasion 

\end{abstract}
\thispagestyle{empty}
\end{titlepage}

\section{Introduction}
 Consider a card game with three  players,  where each player draws a card from the same deck of cards and hand values are compared to determine the winner.  A natural question is how contradictory can two players 1 and 2' opinions be about a common event $A$ (e.g.,  the event that player 1 wins the game).  The celebrated theorem of Aumann \citep{AU76}  concludes that the players cannot agree to disagree: If the two players' posteriors for  $A$ are common knowledge, then these posteriors must be equal. Suppose now instead of forming opinions  about a common event,  players 1 and 2 are invited to form opinions about two disjoint events $A$ and $B$ (e.g., the events that ones own wins the game).  The question then is whether there is a way to combine their predictions, i.e.,  to come up with a forecast on $A\cup B$. A common intuition suggests that the two players cannot attach to $A$ and $B $ very high probabilities simultaneously.   Then how contradictory can the two players' opinions be? In this paper, we show that a complete solution to this problem can be derived from the well-known Maskin-Riley-Matthews-Border condition (\citealp{MR84}; \citealp{MA84}; \citealp{BO91}) in auction theory.
 
Formally we study a class of games of chance (e.g.,  pokers, dices, horse races) in a mechanism design framework. There is a set of agents. Each agent has a set of possible types (e.g., cards).  The set of all type profiles of agents is associated with a probability space. An outcome function (e.g., the rule of a card game) assigns each {\it true} type profiles of agents a probability of winning for each agent.  Each agent privately observes her type and forms first-order posterior belief about  her {\it payoff-relevant} event, i.e.,  the event that  she wins the game. We call a prior-outcome function pair a game. Given any game, a distribution of joint posterior beliefs of agents would naturally arise from the game.

We completely characterize the set of all feasible joint posterior beliefs that arise from all games.   To obtain a necessary condition for feasibility,  we apply Border's theorem to each game. The key to our analysis is a change of measures, which enables us to find a new variant of the Border inequalities. We call such inequalities belief-based  Border's inequalities (the Border$^*$ inequalities thereafter). To show that the Border$^*$ inequalities is sufficient for feasibility, we use a revelation principle to construct a required underlying game. We next discuss how our main characterization result changes when the prior is fixed. We show that if the prior is independent and atomless, which hold for standard independent private value models, the Border$^*$ inequalities together with posterior independence is characterization of all  feasible joint  posterior beliefs.

To gain some intuition about the Border$^*$ inequalities, we provide three alternative interpretations. The first interpretation is a Border's version of Aumann's Agreement Theorem where agents form opinions about disjoint events in the outcome space.   We show that when  the agents'  joint posteriors  are  common knowledge,  these  posteriors must lie in the probability simplex of the outcome space.  The result reduces to Aumann's  theorem when there are two agents.  The second interpretation is a no-trade condition.   We show that the  Border$^*$ inequalities  correspond  to a no-trade condition for a new class of bets where  agents bet on disjoint events. For a third interpretation, we show that the Border$^*$ inequalities define the core of a coalitional  game.  In this game, agents with various beliefs can form  blocking coalitions and the worth of a coalition is defined by the probability measure of the agents with such beliefs.  
     
 We generalize our main characterization result from games of chance to a general model of information structure design. We assume that there is a state space and different agents can observe signals and form posteriors about possibly different events, i.e., their payoff-relevant events.  We show that if agents' payoff-relevant events are pairwise disjoint, every feasible joint posterior beliefs problem is essentially equivalent to a reduced-form auction problem and the Border$^*$ inequalities give the characterization condition.  Our characterization provides one tractable way to generalize a two-agent characterization of feasible posterior beliefs over a common event in \cite*{ABST21}.\footnote{It is well known that in general the feasibility of joint posterior distributions with many agents is a tough question. } When there are two agents, the agents' posteriors about complementary events reduces to the posteriors about a common event.  

We provide two extensions for our results. First,  we show that our result implies a belief-based characterization of auctions in the form of bidders' first-order posterior beliefs about auction outcomes. For some auctions (e.g.,  knockout auction of a cartel, auctions of arts, all-pay auctions), the auction outcomes may not be publicly observable but  important for the after-auction market interaction. In other markets, insurance companies may sell insurances to bidders for losing the bids. In such cases, bidders' posterior beliefs about auction outcomes will be relevant for the analysis. We show that for independent priors, our main characterization result extends to both Bayesian  and  dominant strategy incentive compatible auctions. In this case, the testing sets in the Border$^*$ inequalities reduce to a smaller class of upper contour sets in the posterior belief space. A similar characterization holds for a fixed prior satisfying independence and absolute continuity.  Our result extends a reduced-form equivalence in \cite{MV10} and \cite*{GGKMS13} from interim allocation outcomes to bidders' posterior beliefs about payoff-relevant events.   

For the general model of designing information structures, we show that Border-like characterization is very powerful in bounding the correlation of feasible joint posterior beliefs.  We find that perfectly positively correlated beliefs are infeasible.  For several classes of classic correlated distributions (i.e., copulas),  positive dependence of beliefs are completely ruled out by the Border$^*$ inequalities. While some existing characterization results  (e.g., \citealp{ABST21}) show that the posteriors for a common event cannot be too negatively correlated,   our result shows that the posteriors for disjoint events cannot be too positively correlated. Hence the two characterizations implied by Aumann's theorem and Border's theorem provide  two ends of the spectrum of beliefs correlation.

\Xomit{
Consider a marriage market where a third party (i.e., an online  platform, family or friends) arranges a marriage for  a woman.\footnote{For example, the family arranged marriage has been the norm in China and many   Asia countries for centuries and is becoming less popular as dating without parental involvement becomes more socially acceptable. } Suppose there are $n$ men available but only one of them can lead to a good match for both sides. Ex ante both the woman and all men as well as the third party do not know the state. Each man can receive some signal from the third party and decide whether to pay a cost (i.e.,  fee for the third party) to approach the woman.     Our results show that in many cases the third party is not able to induce strictly positive correlation  and independent information structure is already the best for the third party. }

In Section 2, we introduce a poker game model and main results. Section 3 discusses the general model.  Section 4 generalizes the main characterization to incorporate incentive compatibility.  Section 5 studies feasible correlation of posterior beliefs.  Section 6 discusses the relations of the paper to the literature. The missing proofs are deferred to Appendix.

   \section{A Poker Game Model}
 Let $N=\{1,\dots,n\}$ with $n\geq 2$ be a set of agents (i.e., gamblers).    Each agent $i$ has a prior probability space  $(T_i, \mathcal{F}_i, \mu_i)$ where $T_i$ is the  type space, with a generic element $t_i\in T_i$,  $\mathcal{F}_i$ is the $\sigma$-algebra and $\mu_i$ is the probability measure.  The space of type profiles is $(T , \mathcal{F} , \mu )$ where $T= T_1\times T_2\times \dots \times T_n$, $\mathcal{F}$ is the
product $\sigma$-algebra, and $\mu$ is a probability measure on the product space  with marginals $\mu_i$. We define $(T_{-i}, \mathcal{F}_{-i}, \mu_{-i} )$ the space of type profiles for agents other than   $i$ analogously.  

An outcome function (i.e., the rule of a poker game) is a measurable function $a: T\to \Delta(N)$ that assigns each type profile of agents a probability of winning for each agent.  Different from a classical mechanism design model, we assume outcomes are determined by {\it true} type profiles. So strategies play no role in a game (i.e., a game of chance).  An interim outcome function  is a measurable function $Q: T\to [0,1]^n$  such that $Q_i$ only depends on $t_i$, i.e., $Q_i(t)=Q_i(t_i) $. That is, an interim outcome function assigns each type  of each agent $i$ an expected probability of winning. We say $Q$ is reduced-form implementable if there exists an outcome function $a$ that implements $Q$, i.e., for all $i\in N$ and $t_i\in T_i$, 
\begin{equation}\label{eq:1}
 Q_i (t_i) =\int_{  T_{-i}} a(i,t_i, t_{-i}) d\mu (t_{-i}|t_i) 
 \end{equation}
 
We call a prior-outcome function pair $((T,\mathcal{F},\mu),a)$  a game. Given a game, we define for each agent $ i$ and type $ t_i $ the  interim winning probability $x_i=Q_i (t_i) $   the (first-order) posterior belief  of agent $i$  about {\it the event  that agent $i $ wins the game}. Then $x_i\in [0,1]$.  For each type profile $t=(t_1,\dots,t_n)$, we say $ (x_1,\dots,x_n)= (Q_1(t_1) ,\dots,Q_n(t_n))$ is a vector of joint posterior beliefs.

We denote by $\nu $ the joint distribution of posterior beliefs   induced by $\mu$ and $a$. That is for every measurable set $  C\subset [0,1]^n$,
      \begin{equation} \label{eq:2}
 \nu (C)=\mu(Q\in C)
    \end{equation}
Then $\nu$ is the pushforward measure of $\mu$ by $Q$ and we write $\nu=Q_{\#} \mu $. Let $\nu_i$ be the marginal probability distribution of  $\nu $. 
    
    \begin{df}  $ \nu  \in \Delta ([0,1]^n)$ is  a feasible joint distribution of posterior beliefs for some game, if there exist a prior probability space $(T,\mathcal{F},\mu)$  and an outcome function   $a$ such that $\nu$ and $ ((T,\mathcal{F},\mu),a)$ satisfy \eqref{eq:1} and \eqref{eq:2}.
         \end{df}
   
    In the following subsections 2.1-2.2 and 2.3, we provide two characterizations of all feasible joint distributions of posterior beliefs from all games: a prior-free characterization and a  fixed-prior characterization.  In both models, we assume the agents have a common prior probability $ \mu$.  While the fixed-prior characterization is    relevant for a game designer whose prior is the common prior, the prior-free characterization is relevant when the agents have a  common prior but the game designer has no such information, in which case a prior-free characterization is very desirable.

\subsection{Main Result}
      
The following Theorem 1 provides a characterization of the set of feasible joint  posterior beliefs from games and is the main result of the paper.
    
  \begin{tm} \label{tm1}  $ \nu  \in \Delta ([0,1]^n)$ is a feasible joint distribution of posterior beliefs for some game  if and only if    
\begin{equation} \label{tm1:1}
  \sum_{i\in N} \int_{  C_ i } x_i  d\nu _i (x_i)\leq 
  \nu \left(\underset{i\in N} {\cup  } (C_i\times [0,1]^{n-1}) \right) 
 \end{equation}
for all  measurable sets  $C_i\subseteq  [0,1] $, $i\in N$ and
\begin{equation} \label{tm1:2}
  \sum_{i\in N} \int_{[0,1] } x_i  d\nu _i (x_i) =1 
   \end{equation}

  \end{tm} 
  
  We call \eqref{tm1:1} and \eqref{tm1:2} the Border$^*$ inequalities. \eqref{tm1:1}  says that for any subsets $C_1,\dots,C_n$ in the agents' posterior belief spaces, the sum of the expectations of beliefs of all agents is no greater than the probability that at least one agent has such a belief.  \eqref{tm1:2} is a new variant of the martingale condition \citep{AM95, KG11}. It says that the sum of the expectations of posterior beliefs of all agents (i.e., the sum of the prior probabilities for different outcomes)  is equal to 1.   
  
  Below we first provide a proof for the Theorem and defer its alternative interpretations to Section 2.2.   The proof of necessity is in essence a reformulation of Border's theorem  with a change of measures.   To show sufficiency, for any posterior beliefs satisfying the characterization inequalities, we explicitly construct a prior probability space and an interim outcome function such that  Border's theorem can be applied. The key to our construction is a revelation argument, i.e., whenever such a feasible solution exists, we can choose the prior probability space equal to the posterior probability space and an interim outcome function equal to the identity map.
       
  \begin{proof} [Proof of Theorem 1]

 Border's theorem  \citep{BO91,BO07} provides a necessary and sufficient condition  for an interim outcome function to be implementable and will be essential for our analysis. We will use a generalization of Border's theorem in  \cite*{CKM13} that allows floor constraint.\\
     
\begin{lm} Let $Q: T\to [0,1]^n $ be measurable and $Q_i(t)=Q_i(t_i) $. $ Q$   is reduced-form implementable if and only if 
 \begin{equation} \label{eq:BO1}
 \sum_{i\in N}  \int_{  E_i} Q_i (t_i) d\mu_i(t_i)   \leq  \mu\left(\underset{i\in N} {\cup  } (E_i\times T_{-i})\right) 
 \end{equation}
   for all measurable sets  $E_i\subseteq T_i$ with $ i\in N$,  and 
  \begin{equation} \label{eq:BO2}
 \sum_{i\in N}  \int_{T_i} Q_i (t_i) d\mu_i(t_i) =1 
 \end{equation}
\end{lm}

{\bf Only If.}   Suppose $  \nu $ is feasible for some game, i.e., there exist  a prior probability space $(T,\mathcal{F}, \mu)$ and an outcome function $a$ such that $\nu$ and $ ((T,\mathcal{F}, \mu),a)$ satisfy \eqref{eq:1} and \eqref{eq:2}.  Then  by definition the interim outcome function $Q$  generated by $ ((T,\mathcal{F}, \mu),a)$  is reduced-form implementable. So by the necessity part of Lemma 1, $Q$ satisfies Border's condition \eqref{eq:BO1} and \eqref{eq:BO2}. 

To obtain a reformulation of Border's theorem, we use a change of measures. By characterization of pushforward measures,   if $Q$ pushes forward  $\mu$ to $\nu$, then for every $\mu$-integrable functions $f: [0,1]^n \to \R$,
 \begin{align}\label{tm:1a}
\int_{[0,1]^n} f(x)d\nu  (x)= \int_{T} f(Q(t))d\mu(t)
 \end{align}
 
 In particular, pick any  measurable set  $C_i\subseteq [0,1] $. Define $E_i=\{t_i\in T_i: Q_i(t_i)\in C_i\}$, then $E_i$  is a measurable set in $T_i$.  Define $f (x)=x_i \cdot\mathbf{1}_{ C_i}(x_i)$ where $\mathbf{1}_{ C_i}$ is the indicator function of $C_i$. Then $f$ depends only on $x_i$.   Substitute $f$ into both sides of \eqref{tm:1a},  we get
    \begin{align}\label{tm:1b}
 \int_{C_i} x_i d\nu_i(t_i)  =  \int_{E_i}  Q_i(t_i)  d\mu_i(t_i) 
 \end{align}

Also note $  \nu $ is feasible implies   $\nu$ and $\mu$ satisfy \eqref{eq:2} for all measurable sets $C\subset [0,1]^n$. In particular, pick $C=\underset{i\in N} {\cup  } (C_i\times [0,1]^{n-1})$, we have 
 \begin{align}\label{tm:1c}
 \nu (C)   =
  \mu\left(Q\in \underset{i\in N} {\cup  } (C_i\times [0,1]^{n-1})\right)  = \mu\left(\underset{i\in N} {\cup  } (E_i\times T_{-i})\right) 
\end{align}
    
Since $Q$ satisfies  Border's condition  for   testing sets $(E_i)_{i \in N}$, substitute \eqref{tm:1b} and \eqref{tm:1c}   into  \eqref{eq:BO1}, we obtain the  inequalities \eqref{tm1:1}.  Finally, for  testing sets $(C_i)_{i \in N}$ with $C_i=[0,1]$, we have $E_i=T_i$. From \eqref{eq:BO2}, we  get the equality \eqref{tm1:2}.

{\bf If.}    Suppose $\nu$ satisfies  \eqref{tm1:1} and   \eqref{tm1:2}. Define a product type space $T=[0,1]^n$ with the Borel $\sigma$-algebra and define a prior probability measure $\mu$ by: for all  measurable $C\subseteq  [0,1]^n$, $\mu(C)=\nu(C) $.   Define $Q:[0,1]^n\to [0,1]^n$ by: for all  $x \in [0,1]^n$,
  \begin{align}
Q_i(x)=x_i, \,\, i\in N 
  \end{align}
  That is, each agent $i$'s  $Q_i$ is equal to her type $x_i$. Since $Q_i$ depends only on $x_i$,  $Q$ is an interim outcome function. Moreover, by construction $\nu$ is the pushforward measure of $\mu$ by $Q$ and \eqref{eq:2} holds.

Now pick any profile of testing sets $(C_i)_{i \in N}$ with $C_i\subseteq [0,1]$ measurable. $\nu$ satisfies  \eqref{tm1:1} for  $(C_i)_{i \in N}$ implies that $Q$ satisfies  \eqref{eq:BO1} for $(E_i)_{i \in N}$ with $E_i=C_i$ for all $i$. From the sufficiency part of Border's theorem, there exists an outcome function $a: T\to \Delta(N)$ such that  $Q$ is the reduced form. Hence we have constructed  a prior  $\mu $   and an outcome function $a$ such that $\nu$,  $\mu $ and $a$ satisfy \eqref{eq:1} and \eqref{eq:2}, i.e.,  $\nu$ is feasible for game $(\mu,a)$.
\end{proof}

  \subsection{Interpretations of Theorem 1}
  
We present three different interpretations of Theorem 1. The first two interpretations by Aumann's agreement theorem and no trade theorem are closely related to the literature, while the third interpretation by blocking is new for our setting. 
     
\noindent{1.  \bf Agree to Disagree.} Below  we  show that Border's inequalities are sufficient to provide a generalized version of Aumann's agreement theorem with $n$ agents. The theorem reduces to the classic Aumann's agreement theorem when $n=2$. In this case, there are two outcomes and the agents' posteriors on complementary events reduces to posteriors on a common event.
      
\begin{tm}  (Aumann's agreement theorem, Border's version)   For each agent $i\in N$, let $E_i$ be the set of types that $ i$ has  posterior $ r_i $ for the event that $i$ wins the game. 

(1) If   $Q_i =r_i$ for all $i\in N$  is common knowledge, then $\sum_{i\in N}r_i=1$.   

(2) For $ |N|=2$,  if $Q_1 =r_1$ and  $Q_2=r_2$  is common knowledge, then $r_1=1-r_2$.

\end{tm}
 Theorem 2 states that if the event $ E_1\times\dots\times E_n$ is  common knowledge, then  the agents' posteriors about their payoff-relevant events must lie in the probability simplex over the outcomes. For intuition, suppose there are three agents and the game designer may wish to persuade each agent that he/she will win for sure.  Can the designer persuade all agents simultaneously? Theorem 2 suggests that  the answer is negative if the agents can communicate, i.e.,  if their posteriors are common knowledge.  
   
\begin{proof}[Proof of Theorem 2] Suppose each   $Q_i $ is a constant $r_i$ on $E_i$. If $E=E_1\times\dots\times E_n$ is common knowledge, then
\begin{equation}\label{eq:3.2}
\mu_i(E_i)=\mu (E)  =\mu\left(\underset{i\in N} {\cup  }(E_i\times T_{-i})\right) \end{equation}
 First apply Border's condition   \eqref{eq:BO1}  to $(E_i)_{i\in N }$,
 \begin{equation}\label{eq:3.3}
 \sum_{i\in N}   \mu_i(E_i) r_i  \leq  \mu\left(\underset{i\in N} {\cup  } (E_i\times T_{-i})\right) 
\end{equation}
Combine \eqref{eq:3.2} and  \eqref{eq:3.3}, we have  
\begin{equation}\sum_{i\in N} r_i\leq 1
\end{equation}
  
Next let $E_i'=T_i\setminus E_i$. Apply Border's inequalities \eqref{eq:BO1}   to  $(E_i')_{i\in N }$, and subtract \eqref{eq:BO2}  from both sides of \eqref{eq:BO1},  we get
\begin{equation} 
 \sum_{i\in N}   \mu_i(E_i) r_i  \geq  \mu (E).
\end{equation}
We get
 \begin{equation}
 \sum_{i\in N} r_i\geq 1
\end{equation}
Hence we conclude  $\sum_{i\in N} r_i=1$.
 \end{proof}

 An immediate implication of Theorem 2 is that  the joint distribution of posteriors with perfectly positive correlation is infeasible. Consider the perfectly positively correlated joint posteriors $(x_1,\dots,x_n)$   uniformly distributed on the main diagonal of $[0,1]^n$, we have the following corollary.

 \begin{corollary}Suppose $\nu\in  \Delta([0,1]^n) $ is uniform on the diagonal $x_1=x_2=\cdots=x_n$, then  $\nu$ is not feasible for any game. \end{corollary}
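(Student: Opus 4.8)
The plan is to assume that $\nu$, the uniform distribution on the diagonal $\{(s,\dots,s):s\in[0,1]\}$, is feasible for some game, and to contradict Theorem 1 by exhibiting a profile of testing sets for which the Border$^*$ inequality \eqref{tm1:1} fails. I first record two elementary observations. Because $\nu$ is the pushforward of the uniform measure on $[0,1]$ under $s\mapsto(s,\dots,s)$, each marginal $\nu_i$ is the uniform (Lebesgue) measure on $[0,1]$. Moreover, for any testing sets $C_i\subseteq[0,1]$, a diagonal point $(s,\dots,s)$ lies in $\bigcup_{i\in N}(C_i\times[0,1]^{n-1})$ if and only if $s\in\bigcup_{i\in N}C_i$, so $\nu\bigl(\bigcup_{i\in N}(C_i\times[0,1]^{n-1})\bigr)=\mathrm{Leb}\bigl(\bigcup_{i\in N}C_i\bigr)$.

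Next I specialize. Take $C_i=[\tfrac12,1]$ for every $i\in N$. The left-hand side of \eqref{tm1:1} is then $\sum_{i\in N}\int_{1/2}^{1}x\,dx=\tfrac{3n}{8}$, while by the second observation the right-hand side is $\mathrm{Leb}([\tfrac12,1])=\tfrac12$. Feasibility would therefore require $\tfrac{3n}{8}\le\tfrac12$, i.e.\ $n\le\tfrac43$, which is impossible for $n\ge2$; hence $\nu$ is not feasible. (Alternatively, one can first note that the martingale condition \eqref{tm1:2} already fails whenever $n\ge3$, since $\sum_{i\in N}\int_{[0,1]}x_i\,d\nu_i=\tfrac n2\neq1$; the choice $C_i=[\tfrac12,1]$ then only needs to handle the remaining case $n=2$, where \eqref{tm1:2} holds with equality and \eqref{tm1:1} does the work.)

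The step requiring the most care is the evaluation of the right-hand side of \eqref{tm1:1}, and it is precisely there that perfect positive correlation bites: when $\nu$ is concentrated on the diagonal, the $n$-fold union $\bigcup_{i\in N}(C_i\times[0,1]^{n-1})$ collapses to a single subset of $[0,1]$ whose measure does not grow with $n$, whereas the left-hand side scales linearly in $n$. It is worth noting that Theorem 2 cannot be applied directly here: the events ``$Q_i=r$ for all $i$'' are $\nu$-null because the diagonal distribution is atomless, so Theorem 2 imposes no constraint, and one genuinely needs the measure-theoretic inequality \eqref{tm1:1}.
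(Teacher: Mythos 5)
Your proof is correct, but it takes a different route from the paper. The paper derives the corollary from Theorem 2 (the Border version of the agreement theorem): at every realization of the diagonal distribution the agents' posteriors are (informally) common knowledge, so Theorem 2 would force $\sum_i r_i = nr = 1$ at every realization $r\in[0,1]$, which is absurd. You instead go back to Theorem 1 and exhibit explicit testing sets violating the Border$^*$ inequality: with $C_i=[\tfrac12,1]$ the left side is $3n/8$ while, because $\nu$ lives on the diagonal, the right side collapses to $\mathrm{Leb}\bigl(\bigcup_i C_i\bigr)=\tfrac12$, and $3n/8>\tfrac12$ for all $n\ge 2$ (your computations, including the marginals being uniform and the observation that the martingale condition \eqref{tm1:2} already fails for $n\ge 3$, are all correct). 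What each approach buys: the paper's argument is more conceptual and exhibits the corollary as an immediate consequence of the no-agreement result, but as you rightly note it is informal for an atomless diagonal distribution, since the events $\{Q_i=r\}$ are null and the proof of Theorem 2 divides by their measure; your argument is more elementary and fully rigorous, and it also makes transparent the mechanism (the union on the right side does not grow with $n$ while the left side scales linearly). One could view your choice of testing sets as the rigorous ``thickened'' version of the paper's common-knowledge argument, replacing the null events $\{Q_i=r\}$ by intervals of positive measure, so your proof in effect repairs the gap you identify rather than merely sidestepping Theorem 2.
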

 
 \begin{proof} The agents' posteriors are common knowledge for all possible realizations $(x_1,\dots,x_n)$, and by Theorem 2 it implies $\sum_i r_i= 1$ for all $r_i\in [0,1]$. But this is impossible for $r_i $ sufficiently large. 
  \end{proof}

\noindent   {  2. \bf A No trade condition. }  Theorem 1 also formalizes a relationship between Border's theorem and No Trade theorem   (\citealp{MS82}, \citealp{MO20}, \citealp{ABST21}).  The Border$^*$ inequalities correspond to a no-trade condition for a new class of bets where agents bet on pairwise disjoint events. Suppose $N=\{1,\dots,n\}$ consists of all possible race outcomes. We define a market for bets where  
each agent  $i\in N$ decides whether to buy one unit of bet $\beta_i$ and each  bet  $\beta_i: N\to \{0,1\}$   is defined by
  \begin{align*}
\beta_i(j)=
\begin{cases}
1 & \textrm{if}~j=i  \\
0 & \textrm{otherwise}
\end{cases}
\end{align*}
 
In other words, each agent  $i $ can bet over whether outcome $i $ occurs and the mediator must clear the market.  If  agent $i $'s posterior for outcome $i$ is $x_i$, then agent $i$ is willing to pay a price of $  x_i$ for the bet $\beta_i$. Suppose further that each agent $i $'s strategy is  to bet if and only if $x_i\in C_i$. For the mediator, she is expected to pay  a price of $1$ if some agent bets and wins.   Condition \eqref{tm1:1} requires that  the mediator's ex ante expected revenue is bound above by the maximum expected payment when at one least one agent bets, for any strategy profile of agents.

\noindent    { 3. \bf A blocking condition.} We next provide an alternative view  on Theorem 1 from  coalitional games where the core of a game is introduced. We first rewrite all  Border$^*$ inequalities in a form with floor constraints. Subtract   \eqref{eq:BO2} from both sides of \eqref{eq:BO1}, we have \eqref{eq:BO1} can be written as
     \begin{equation} \label{tm1:3}
  \sum_{i\in N} \int_{  C_ i } x_i  d\nu _i (x_i)\geq 
  \nu (C_1\times\dots\times C_n)
 \end{equation}
for all  measurable sets  $C_i\subseteq  [0,1] $, $i\in N$.
  
We define a coalitional game problem $(N, N^*, \nu  )$ as follows. Let $N=\{1,\dots,n\}$ be a set of agents. For each agent $i\in N$, let $N_i=[0,1]$ be an   set of subagents of agent $i $.   Let $N^* =\cup_{i\in N }N_i$ be the set of all subagents.  We call $C=\cup_{i\in N }C_i\subseteq N^*$ a coalition.   Let $\nu\in \Delta([0,1]^n) $ be a joint distribution of all profiles of subagents.  The characteristic function $w_{\nu}:  2^{N^*} \to \R$ assigns for each coalition $C \subseteq N^*$ a worth of coalition $w_{\nu}(C)= \nu (C_1\times\dots\times C_n)$. That is  the worth of a coalition  is defined by the probability measure of the coalition. Note that  agents are complement in this game: $C_j=\emptyset$ for some $j $ implies $w_{\nu}(C)=0$.
 
A solution of the problem $(N, N^*, \nu  )$ is  a utility allocation function $\pi: N^*\to [0,1]$ that assigns for each subagent $x_i\in [0,1]$ of each agent $i$ a utility $\pi(x_i)$.  For each coalition $C$, we define the total payoff of coalition $C$ from solution $ \pi$ by
\begin{align}
\pi_{\nu}(C)=  \sum_{i\in N} \int_{  C_ i } \pi(x_i ) d\nu _i (x_i)  
\end{align}
  We say $\pi $ is in the core of the game $(N, N^*, \nu  )$ if for every coalition $C \subseteq N^*$,   $\pi_{\nu}(C)\geq w_{\nu}(C) $ and $\pi_{\nu}(N^*)= w_{\nu}(N^*) $. Then Border$^*$ inequalities \eqref{tm1:2} and \eqref{tm1:3} can be interpreted as that the solution $\pi^*(x_i)=x_i$ is in the core of the game. Alternatively, we can define all core-stable distributions of subagents  by all probability distributions $\nu\in \Delta([0,1]^n)$ such that $\pi^*$ is the core of the game induced by $\nu$.

 \subsection{ Characterization with a given prior}

\Xomit{ We first provide two propositions that give two necessary conditions on feasibility. 

The first proposition states that prior independence implies posterior independence.
    
 \begin{prop}Suppose $ \nu  \in \Delta ([0,1]^n)$ is a feasible joint distribution of posterior beliefs for some game with a given prior $\mu$.   Then $\nu$ is independent if and only if  $\mu$ is independent. 
   \end{prop}
 \begin{proof} Suppose $\mu$ is independent.  From  \eqref{eq:2}, 
 for every measurable sets $  C_i\subseteq [0,1] $, $i\in N$,
      \begin{align*} 
 \nu (C_1\times\dots\times C_n)&=\mu(Q\in C_1\times\dots\times C_n)\\
&= \mu (Q_1\in C_1 ,\dots, Q_n\in C_n )\\
&  =\prod_{i\in N}\mu_i(Q_i\in C_i )\\
&=\prod_{i\in N}\nu_i( C_i )
    \end{align*}
So $\nu$ is independent. Conversely, suppose $\nu$ is independent. A similar argument implies $\mu$ is independent.
           \end{proof}
At first glance Proposition 1 seems counterintuitive as even if the prior is independent, the agents' posteriors about outcomes can be correlated by an auction. For example when all agents directly observe the auction outcome,  their beliefs  about the outcome are perfectly correlated. The reason for independent posteriors in our model is that the interim rules are marginal outcome functions, i.e.,    each agent's interim rule depends only on the agent's type and is separable across different agents. Then the  only way of correlating posteriors is the correlations in the prior.}

 Characterization with a given  prior  is more involved since a feasible joint distribution of posterior beliefs $\nu$ is further constrained by the given prior $\mu$ beyond the reduced form feasibility constraint, i.e.,  there must exist  an interim outcome function $Q$ such that $\nu=Q_{\#}\mu$.  It is immediate that if $\mu$ has an atom, then $\nu$ has an atom.
Furthermore, $\nu$ is not arbitrary. For example, if $\mu$ is a single point mass and $\nu$ consists of two point masses, each
with mass 1/2, then no map can split $\mu$ in half.  Theorem 3 below shows that atoms are almost the only obstruction. Under an atomless assumption about the prior probability spaces we can obtain a characterization similar as Theorem 1.  Let $(T, \mathcal{\mathcal{F}})$ be a measure space such that $\{t\} \in \mathcal{F}$ for every $t\in T$. We say a measure  $\mu$  on $(T, \mathcal{\mathcal{F}})$   is  atomless if for any $A\in \mathcal{F}$ such that $\mu(A)>0$, there exists $B\in \mathcal{F}$ satisfies $0<\mu(B)<\mu(A)$. An example of atomless measures is that $ \mu$ is a continuous probability distribution on $ [0,1]^n$.

  \begin{tm} Let $\mu$ be an independent  and atomless prior.  $ \nu  \in \Delta ([0,1]^n)$ is a feasible joint distribution of posterior beliefs for some game with prior $\mu$   if and only if  $ \nu $   is independent and satisfies the Border$^*$  inequalities \eqref{tm1:1} and \eqref{tm1:2}.
      \end{tm} 
  
 Theorem 3 shows that a fixed prior imposes a strong restriction on the set of feasible posterior beliefs: prior independence implies posterior independence. On the other hand, it shows that a fixed prior imposes  only a weak restriction on the set of feasible posterior beliefs: as long as an atomless prior is considered, all posterior beliefs satisfying the Border$^*$  inequalities are feasible.    
    
  \section{The general model}

 Let $N=\{1,\dots,n\}$, $n\geq 2$ be a finite set of agents and $\Omega=\{ \omega_1,\dots,\omega_m\}$, $m\geq 2$ be a finite state space.  The prior probability of  state $\omega\in \Omega$  is denoted by $p_0^{\omega}$. The agents  have common prior  $p_0$ regarding the states. An information structure $I = ((S_i)_{i\in N}, P)$ consists of  signal spaces   $S_i $ (each equipped with a $\sigma$-algebra)  and a distribution
$P \in \Delta(\Omega \times  S_1\times\dots\times S_n)$, with the marginal of $P$ on  $\Omega$ equal to $p_0$.
 Let $(\omega, s_1,\dots,s_n)$ be a realization.   Then, each agent $i$ observes the signal $s_i\in S_i$.   Let  $q_i(s_i) \in \Delta(\Omega)$ denote  agent $i$'s   posterior belief about $\Omega$   after receiving the signal $s_i$. The posterior belief attributed to $A\subset \Omega$  is given by
 \begin{equation}
q_i^{A}(s_i)= P(A | s_i).
\end{equation}
 Denoted by $P^S$  the marginal of $P$ on the signal space $S$,    and by $p^S_i$ the marginal of $P $ on the signal space $S_i$.

{ \bf Payoff-relevant events.}  Suppose for each agent $i$, there is  a set  $A_i\subseteq \Omega $  of payoff- relevant states for agent $i$. We call $A_i$ the payoff-relevant event for agent $i$.  Below we assume each agent has one payoff-relevant event, i.e., all states in $A_i$ and $\Omega\setminus  A_i$ are payoff equivalent for agent $i$ and agent $i $ is interested in the posterior belief for $A_i$. Our result generalizes to the case where each agent has multiple payoff-relevant events.

Our model cover two important classes of problems: (1) Classical binary states problem with $\Omega=\{\omega_1,\omega_2\} $ and $A_i=\{\omega_1\}$ for all $i\in N$.
 (2) Poker game-like problem where $A_i\cap A_j=\emptyset$ for all $i,j$. That is,  the agents' payoff-relevant events are pairwise disjoint. In particular, we have $\Omega=N$ and  $A_i=\{i\}$ for a poker game model.
 
We denote  $x_i=q_i^{A_i}(s_i)\in [0,1] $ agent $i$'s posterior attributed to $A_i$ and $x=(x_1,\dots, x_n)$.  Denote $[0,1]^n$ the product  space of posterior beliefs for $A_1,\dots,A_n$, and denote by $\psi^I \in  \Delta([0,1]^n)$ the joint distribution of posterior beliefs induced by $I$.   That is, for each measurable set  $B\subset   [0,1]^n$, define
\begin{equation}
\nu^I (B)=P(x\in B).
\end{equation}
   Conversely, we can start with an arbitrary  $ \nu \in    \Delta ([0,1]^n)$  and ask whether it can be generated by some information structure.   
\begin{df}Let $(A_1,\dots,A_n)$ be a profile of payoff-relevant events for agents.
 $ \nu  \in  \Delta (  [0,1]^n)$  is $p_0$-feasible if there exists  some information structure $I$ with prior $p_0$ such that $ \nu =\nu ^I  $.
\end{df}
Below for a probability measure  $\nu  \in \Delta ([0,1]^n)$,  and for agent $i$, we denote by $\nu_i $   the marginal distribution of $\nu$ for agent $i$.

Theorem 4 below characterizes all feasible joint  posterior beliefs  in the general model of designing information structures. The proof of the Theorem also establishes an equivalence between feasibility of reduced-form auctions and feasibility of joint  posterior beliefs. In reduced-form auctions (and poker games), the information structure is defined by the product of outcome function and prior, and  signals are defined by interim outcome functions.  Theorem 4 shows that when the reduced-form auctions with variable priors are considered, they generate the same set of feasible joint posterior beliefs as all information structures in the general model.

  \begin{tm} Suppose $(A_1,\dots,A_n)$ are pairwise disjoint and $\cup_i A_i=\Omega$.
Let  $ \nu  \in \Delta ([0,1]^n)$  is $p_0$-feasible for some $p_0$  if and only if $\nu$ satisfies  \eqref{tm1:1} and \eqref{tm1:2}.
  \end{tm}

\noindent Remark 1. When there are two states and two agents, our Theorem 4 reduces to Theorem 2 of \cite{ABST21}. To see this,  notice that when there are two agents,  the agents' posteriors about complementary events reduce to the posteriors about a common state.

\noindent Remark 2. Theorem 4 can be generalized to a problem where each agent has multiple  payoff-relevant events $A_i^1,\dots,A_i^{l_i}$ and the events of all agents are pairwise disjoint, by a generalized Border theorem for multiple heterogeneous objects (\citealp{ZH21,LY21}), where each agent's reduced-form outcomes are multi-dimensional.

\noindent Remark 3. Theorem 4 restricts attention to a model of joint  posterior beliefs with $A_i\cap A_j=\emptyset$. The same technique can be used to establish  an equivalence between feasibility of reduced-form social choice problems (\citealp{GK11, LM23}) and feasibility of joint  posterior beliefs over a complete set of states, i.e., $A_i^k=\{\omega_k\}$ for all $k=1,\dots,m$ and for all $i$.\footnote{It is worth noting that characterization of reduced-form social choice problems in \cite{GK11} by supporting functions is similar to a  characterization of common prior with many-states in \cite{MO20}. }

\noindent Remark 4.  The general model in Section 3 is not equivalent to the fixed-prior poker game model in Section 2.3. Notice that in the  fixed-prior  poker game   model, the type distributions (i.e., signal distributions) are fixed while only the outcome function (the conditional probabilities of the states given signals) is endogenous.

 \section{Bic-dic equivalence}
 
Our model also provides an approach to auctions in the form of bidders' first-order posterior beliefs about auction outcomes.  In a traditional auction design problem, a bidder can calculate her interim allocation probabilities from different bids in any auction.  Since the interim allocation probabilities influence a bidder's interim utility, it is often understood as a  real variable, i.e., a rule determines real outcomes. The interim allocation probabilities, however, can be also interpreted as a belief variable.  Indeed, a bidder's interim probability of winning  is just her posterior belief about  the event that she wins the object. 
  
 In this section, we study how incentive compatibility further restricts  the set of feasible joint posteriors in auctions. We define an auction model  similar to a poker model in Section 2, except that we let $N=\{1,\dots,n\}$ denote a set of bidders, where bidder 1 is the seller and bidders $2,\dots, n$ are the buyers and  the types profiles in type space $T$ may influence the payoffs of the bidders. Different from a poker game where the outcomes  depend on players' true types, the outcomes in an auction depends on buyers' reported types.    We say an outcome function $a: T\to \Delta(N)$ (i.e., an auction) is (1)  Bayesian 
 incentive compatible (BIC)  if each $Q_i$ is nondecreasing in $t_i$  and (2)
dominant strategy incentive compatible (DIC) if each $a(i)$ is nondecreasing in $t_i$. Our model covers auctions with many buyers and bilateral trade with one seller and one buyer as \cite{MS83}.

We first present an extension  of Theorem 1 for independent priors.   We show that Border$^*$ inequalities, together with monotonicity constraint, characterize all feasible posterior beliefs, for both BIC and DIC auctions. The key to the proof  is using revelation principlem which allows us to construct a BIC interim outcome function. For constructing a DIC auction, we use a reduced form equivalence between BIC and DIC auctions \citep{MV10, GGKMS13}. 
  
 \begin{tm}  Consider all independent priors on $[0,1]^n$.  $ \nu  \in \Delta ([0,1]^n)$ is feasible for some    BIC (or DIC) auction and some prior if and only if $\nu$ is independent and 
   \begin{equation} \label{tm6}
  \sum_{i\in N} \int_{a_i}^1 x_i d \nu _i (x_i)\leq 1-
\prod_{i\in N}  \nu_i(a_i)
 \end{equation}
for all   $a_i\in [0,1]$,  $i\in N$, and \eqref{tm1:2} holds.
  \end{tm}

Theorem 5 states that for independent priors, there is an equivalence between BIC and DIC  in terms of the set of feasible posterior beliefs.  However, this Theorem needs not to imply that  the extreme points of the set of feasible posterior beliefs are generated by the extreme points of reduced-form implementable outcome functions. Below we provide an example to show that, for any two interim outcome functions,  the convex combinations  of their posterior beliefs need not to be the posterior beliefs of these functions' convex combinations.  

\begin{example}
Suppose there are two bidders 1 and 2 with independent and uniform priors on $[0,1]$. Consider auction $a$ that always allocates the object to bidder 1. Then $Q_1^a =1$ and $Q_2^a =0$. The bidders'  posteriors are the Dirac measure $\nu^a =\delta_{(1,0)}$.   Next  consider  auction $b$ that  efficiently allocates the object to the high bidder. Then $Q_i^b(t_i)=t_i$. The bidders' posteriors are $\nu^b(x_1,x_2)=x_1 x_2$.  Finally,   consider auction $c $ as the equal randomization of two auctions $a$ and $b$. Then $Q_1^c(t_1)=\frac{1}{2}+\frac{t_1}{2}$ and $Q_2^c(t_2)=\frac{t_2}{2}$, and  $\nu^c(x_1,x_2)=2(2x_1-1)x_2$ with $x_1\in [ \frac{1}{2},1]$ and $x_2\in [0, \frac{1}{2}]$.  Hence $\nu^c$ is not a convex combination of $\nu^a$ and $\nu^b$.
 \end{example}

We next present an extension of Theorem 3 with  a fixed prior. When both the prior and posteriors are fixed,  the existence of a monotone interim outcome function is nontrivial since the revelation argument no longer works. Below we show that if $\mu$ is an independent and atomless measure on $ [0,1]^n$,  we can construct a monotone interim outcome function that pushes foward $\mu$ to $\nu$.  The proof is based on Brenier's theorem in optimal transport theory (see Lemma 4 in Appendix).
 
 \begin{tm}  Suppose $\mu\in\Delta ([0,1]^n)$ is  independent and absolutely continuous.   $ \nu  \in \Delta ([0,1]^n)$ is feasible for some BIC (or DIC) auction given  $\mu$ if and only if $\nu$ is independent and 
 \eqref{tm1:2} and   $\eqref{tm6}$ hold.
  \end{tm}

\Xomit{
  \subsection{Correlated Priors}

  \begin{tm}  Consider all auction problems with affiliated priors and dominant strategy incentive compatible auctions.  $ \nu  \in \Delta ([0,1]^n)$ is feasible for some problem  if and only if  \eqref{tm1:1} holds
for all $C_i=\{x_i\in [0,1]: x_i\geq a_i\}$ with $a_i\in [0,1]$,  $i\in N$, and \eqref{tm1:2} holds.
  
\end{tm}

 \begin{lm} Let $\mu\in \Delta([0,1]^n) $ be an affiliated  probability distribution.  $ Q$ is reduced-form implementable for a dominant strategy auction  if and only if  \eqref{tm1:1} and  \eqref{tm1:2} hold, and each $Q_i$ is nondecreasing.  
 \end{lm}
 
 \begin{proof} {\bf Only If}  Suppose $\nu$ is affiliated and $a: T\to \Delta(N)$ is a dominant strategy auction. Then $a_i(t_i,t_{-i})$ is nondecreasing in $t_i$.  Define $Q_i(t_i)= \int_{T_{-i}}   a_i(t_i,t_{-i}) d\mu (t_{-i}|t_i)$. By Lemma 3.2.1 in Athey (2001), if $a_i$ is a nondecreasing function in $t_i$ and $\mu$ is affiliated, then $Q_i(t_i)$ is a nondecreasing function in $t_i$. Moreover, from Border's theorem (Lemma 1), we know that $Q$ satisfies \eqref{tm1:1} and  \eqref{tm1:2}.
 
 {\bf If}  Suppose  $Q$ satisfies \eqref{tm1:1} and  \eqref{tm1:2}. $Q$ is reduced-form implementable 
   
 \end{proof}

}
 
\section{Feasible correlation of beliefs }
 
In this section, we further investigate how the Border-like characterization of feasibility in Theorem 4 for the general model imply particular forms of correlation between the agents' posterior beliefs. \cite{ABST21} used their characterization of feasibility to study possible dependence of beliefs. They provided a binary signal example and showed that perfect negative correlation is precluded by their characterization condition. \cite{ZI20} provided a related belief-dependence bound tighter than the usual Fr\'echet bounds. See also \cite{BP19} and
\cite{BP20}  for the references  on coherent opinion of experts, which provided alternative approaches to  tight bounds on the
probability that the pair of posteriors differ by more than a constant delta.  Below we examine a series of examples and show that  the Border-like characterization  impose strong restrictions on the bounds and  possible forms of dependence structures.   Roughly,  the Border$^*$ inequalities require that the posterior  beliefs are not too positively dependent.

\subsection{Independence } For information design problems, there are important concerns that require agents'  posterior beliefs to be (unconditional) independent:  either privacy concerns, or
the avoidance of complicated strategic reasoning on agents’ side, or the endogenous choices by a designer (see e.g. \citet{BBM17}, \citet*{HST21}, \citet{BD21}).   
\cite{ABST21}  Proposition 3 showed that for binary states,  an independent and symmetric distribution is infeasible when the number of agents is sufficiently large. Below we present a similar result for the Border$^*$ inequalities.

 We first show that with independence, a further reduction of  the inequalities in Theorem 4 is possible. Using a similar technique as \cite{CKM13}, we establish the following characterization.

\begin{prop} (1) If $\nu$ is independent, $\nu$  is $p_0$-feasible for some $p_0$  if and only if   \eqref{tm1:2} and \eqref{tm6} hold.
 
(2) If $\nu$ is further symmetric (i.e., $\nu=F^n$), $\nu$  is $p_0$-feasible for some $p_0$  if and only if  
 \begin{equation} \label{co4c}
 \int_a^1 x d F  (x)\leq  \int_a^1 F^{n-1}(x) d F  (x) 
 \end{equation}
for all $a\in [0,1]$ and
  \begin{equation} \label{co4d}
n\int_0^1 x d F  (x)=  1
 \end{equation}
 \end{prop}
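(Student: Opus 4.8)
The plan is to start from Theorem 4, which says $\nu$ is $p_0$-feasible for some $p_0$ if and only if the Border$^*$ inequalities \eqref{tm1:1} and \eqref{tm1:2} hold, and then reduce the family of test sets $(C_i)_{i\in N}$ to a minimal subfamily when $\nu$ is independent. The key observation is that under independence we have $\nu(\cup_i (C_i\times[0,1]^{n-1})) = 1 - \prod_{i\in N}(1-\nu_i(C_i))$, since the complement of the union is the product event $\{x_i\notin C_i\ \forall i\}$ and the marginals are independent. So \eqref{tm1:1} becomes $\sum_i \int_{C_i} x_i\,d\nu_i(x_i) \le 1 - \prod_i(1-\nu_i(C_i))$. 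First I would argue that, for a fixed value of each $\nu_i(C_i)$, the left-hand side is maximized by taking $C_i$ to be an upper contour set $\{x_i \ge a_i\}$: indeed $\int_{C_i} x_i\,d\nu_i$ is largest, among sets of given $\nu_i$-measure, when $C_i$ collects the largest values of $x_i$. Since the right-hand side depends on $C_i$ only through $\nu_i(C_i)$, it suffices to check \eqref{tm1:1} on upper contour sets $C_i=\{x_i\ge a_i\}$, which gives exactly \eqref{tm6} with $\nu_i(a_i)=\nu_i([0,a_i))$ (matching the notation $\prod_i \nu_i(a_i)$ in the statement as the probability that all $x_i<a_i$). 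This proves part (1); the technique is the one attributed to \cite{CKM13} in the text.

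For part (2), I would specialize (1) to the symmetric case $\nu=F^n$, so each $\nu_i=F$ and $\nu_i(a_i)=F(a_i^-)$; for $F$ continuous (or after the usual left-continuous-versus-right-continuous bookkeeping) this is $F(a)$. Taking all $a_i=a$ equal, the reduced inequality \eqref{tm6} reads $n\int_a^1 x\,dF(x) \le 1 - F(a)^n$. The remaining work is to show that the apparently stronger-looking requirement — that \eqref{tm6} hold for all profiles $(a_i)$, not just the diagonal — is in the symmetric case equivalent to the diagonal inequality rewritten as \eqref{co4c}, and to reconcile the right-hand sides $1-F(a)^n$ and $\int_a^1 F^{n-1}\,dF$. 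This reconciliation is just calculus: $1 - F(a)^n = \int_a^1 d(F(x)^n) = \int_a^1 n F(x)^{n-1}\,dF(x)$, so dividing the diagonal inequality by $n$ yields precisely \eqref{co4c}. For the off-diagonal profiles, I would use a symmetrization/rearrangement argument: given any profile $(a_1,\dots,a_n)$, averaging the constraint over permutations and using convexity of $t\mapsto \prod_i(1-t_i)$-type expressions reduces the worst case to the symmetric profile, or alternatively one shows directly that the symmetric diagonal constraints already imply all the others. Condition \eqref{co4d} is immediate: it is just \eqref{tm1:2}, $\sum_i\int x_i\,d\nu_i = n\int_0^1 x\,dF(x)=1$.

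The main obstacle I anticipate is the rearrangement step in part (1) — carefully proving that among all measurable $C_i$ with $\nu_i(C_i)$ fixed, the upper contour set maximizes $\int_{C_i}x_i\,d\nu_i$ — and, relatedly, the technical handling of atoms of $F$, where "$\nu_i(a_i)$" must be interpreted as the mass strictly below $a_i$ and one may need to allow randomized thresholds to hit a prescribed measure exactly. A secondary subtlety is confirming that restricting to the diagonal $a_1=\cdots=a_n$ loses nothing in the symmetric case; I expect this follows because the map $(a_i)\mapsto \sum_i\int_{a_i}^1 x\,dF - \bigl(1-\prod_i F(a_i)\bigr)$ attains its maximum on the diagonal, which can be checked by a first-order/exchange argument exploiting symmetry and the product structure. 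Everything else — the independence factorization of the union probability and the calculus identity relating $1-F^n$ to $\int F^{n-1}dF$ — is routine.
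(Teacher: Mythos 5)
The paper itself gives no written proof of Proposition 1 (it only points to the technique of \cite{CKM13}), and your part (1) is exactly that technique: under independence the right-hand side of \eqref{tm1:1} equals $1-\prod_i\bigl(1-\nu_i(C_i)\bigr)$ and so depends on each $C_i$ only through its measure, while a Hardy--Littlewood-type rearrangement shows that among sets of given $\nu_i$-measure an upper contour set maximizes $\int_{C_i}x_i\,d\nu_i$; hence the threshold family \eqref{tm6} is equivalent to the full family, with the atom/randomized-threshold bookkeeping you flag. Part (1) of your plan is therefore correct and in line with the paper's intended argument, and \eqref{co4d} is indeed just \eqref{tm1:2} specialized to identical marginals.

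The one substantive step you leave open in part (2) --- that the diagonal inequalities imply the full threshold family \eqref{tm6} in the symmetric case --- does need an argument, and averaging over permutations by itself does not give it, since the constraint is not linear in $(a_1,\dots,a_n)$. Your convexity instinct closes it in one line, however: the diagonal condition reads $\int_{a}^{1}x\,dF(x)\le \frac{1}{n}\bigl(1-F(a)^n\bigr)$ for every $a$; applying it at $a=a_1,\dots,a_n$, summing, and using the AM--GM bound $\prod_i F(a_i)\le \frac{1}{n}\sum_i F(a_i)^n$ yields $\sum_i\int_{a_i}^{1}x\,dF(x)\le 1-\prod_i F(a_i)$, which is \eqref{tm6}. (Equivalently, one may invoke the symmetric version of Border's theorem for sufficiency of the diagonal family and obtain the off-diagonal inequalities from feasibility via part (1).) Two smaller cautions: the identity $1-F(a)^n=n\int_a^1 F^{n-1}(x)\,dF(x)$ used to rewrite the diagonal constraint as \eqref{co4c} is exact only for continuous $F$ --- with atoms the two sides differ and the left/right-continuity accounting you mention has to be carried out, not just acknowledged --- and in \eqref{tm6} the symbol $\nu_i(a_i)$ must consistently mean the mass below the threshold, as you assume.
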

 
For independent and symmetric posteriors, Border$^*$ inequalities \eqref{tm1:1} and \eqref{tm1:2}  reduce to one-parameter family of inequalities in Proposition 1. Condition \eqref{co4d} is a symmetric martingale condition. Symmetric posteriors imply that the prior probabilities are symmetric for all states, i.e., $p_0^{\omega_i}=\frac{1}{n}$ for all $i\in N$.   

One might think conditions in Proposition 1 have a structure of majorization inequalities characterized by \cite{HR15} and \cite*{KMS21}, because  Border's theorem falls into this class and \eqref{co4c} follows from Border's theorem. We note that  \eqref{co4c},  however,  is not a majorization inequality in a usual sense. This is because   choice variable $F$ appears on both sides of  \eqref{co4c}. It is still unclear whether the theory of majorization can be applied to this problem.

When all agents have symmetric marginals $F$, from \eqref{co4d} $F$ must vary with $n$. Below we assume agent $1$   is uninformed (e.g., the seller in an auction). We can keep  the marginals of $n-1$ agents fixed and vary the number of agents. 
 
 \begin{prop} Suppose agent $1$ is uninformed, $\nu\in [0,1]^{n-1}$ is independent and has the symmetric marginals $\nu_i=F$  for $i=2,\dots,n$. If the number of agents is sufficiently large ($n\to \infty$),  $\nu$ is not feasible. \end{prop}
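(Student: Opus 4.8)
The plan is to derive a contradiction from the characterization in Proposition~1(2) (equivalently Theorem~4 specialized to independent symmetric marginals on the informed agents) by showing that the martingale condition \eqref{co4d} and the Border$^*$ inequality \eqref{co4c} become incompatible as $n\to\infty$. Since agent $1$ is uninformed, its posterior is the constant $p_0^{A_1}$, so only the $n-1$ informed agents carry a distribution $F$; the relevant martingale condition reads $p_0^{A_1}+(n-1)\int_0^1 x\,dF(x)=1$, which forces $\int_0^1 x\,dF(x)=\frac{1-p_0^{A_1}}{n-1}=O(1/n)$. Thus the mean of $F$ shrinks to $0$, and hence (by Markov's inequality) $F$ concentrates near $0$: for any fixed $\varepsilon>0$, $1-F(\varepsilon)=\nu_i([\varepsilon,1])\le \frac{1}{\varepsilon}\int_0^1 x\,dF(x)\to 0$.

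The key step is to evaluate the Border$^*$ inequality \eqref{co4c} at a well-chosen threshold $a$ and show it fails. Rewrite \eqref{co4c} as
\begin{equation}
\int_a^1 x\bigl(1-F^{n-1}(x)\bigr)\,dF(x)\le 0,
\end{equation}
which, since the integrand is nonnegative, can only hold if $F^{n-1}(x)=1$ for $F$-almost every $x>a$ in the support — i.e. the support of $F$ above $a$ must be concentrated at points where $F$ already equals $1$. Intuitively, because $F$ piles up near $0$, for $x$ bounded away from $1$ we have $F(x)$ close to $1$ but strictly less than $1$, so $F^{n-1}(x)$ need not be close to $1$ — this is the tension. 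More precisely, I would pick $a$ near the mean, say $a=\theta\cdot\frac{1-p_0^{A_1}}{n-1}$ for a suitable constant $\theta$, so that a nonnegligible $F$-mass (bounded below independent of $n$, by a reverse-Markov / second-moment type argument, or simply by the fact that $F$ has mean of order $1/n$ and total mass $1$) sits in an interval $[a,b]$ on which $F(x)\le 1-c$ for some $c>0$; then $1-F^{n-1}(x)\ge 1-(1-c)^{n-1}\to 1$ on that interval, while $\int_a^b x\,dF(x)>0$ is bounded below by a positive constant times $1/n$ — so the left side of the displayed inequality is bounded below by a positive quantity, contradiction. One must be slightly careful that the interval $[a,b]$ genuinely carries $F$-mass that is not itself escaping to a point where $F=1$; this is where the uninformedness of agent~1 is convenient, as it pins the mean exactly and prevents $F$ from degenerating to $\delta_0$ (which would violate \eqref{co4d} outright once $p_0^{A_1}<1$).

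The main obstacle is handling the possibility that $F$ has an atom at some point $x_0<1$ with $F(x_0)<1$: then $F^{n-1}(x_0)\not\to 1$ and the argument goes through, but if instead $F$ places its mass in a way that $F(x)=1$ just above the mean (a sharp cutoff near $0$), one needs to check that \eqref{co4d} still fails — indeed if $F$ is supported on $[0,a]$ with $F(a)=1$ then $\int_0^1 x\,dF\le a$, and combined with the required mean $\frac{1-p_0^{A_1}}{n-1}$ this is consistent only if $a\ge \frac{1-p_0^{A_1}}{n-1}$, so the support cannot collapse faster than the mean; a short case analysis splitting on where $F$ first hits $1$ closes this. I would organize the final write-up as: (i) reduce to \eqref{co4c}–\eqref{co4d} via Proposition~1; (ii) deduce $\mathrm{mean}(F)=O(1/n)$ and the concentration of $F$ near $0$; (iii) locate an interval of non-vanishing $F$-mass on which $F$ is bounded away from $1$; (iv) plug into \eqref{co4c} and derive the contradiction for $n$ large.
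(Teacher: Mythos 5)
There is a genuine gap, and it stems from a misreading of the statement. In the paper the marginal $F$ of the informed agents is held \emph{fixed} while the number of agents grows (this is exactly the point of making agent $1$ uninformed: ``we can keep the marginals of $n-1$ agents fixed and vary the number of agents''). Under that reading the proof is essentially the first line of your proposal and nothing more: applying \eqref{tm6} with $a_1=1$ and $a_i=0$ for $i\ge 2$ (or just the martingale condition \eqref{tm1:2}) gives $(n-1)\int_0^1 x\,dF(x)\le 1$, which is violated for $n$ large whenever $F$ has positive mean. You actually write this inequality down, but instead of stopping there you treat it as a constraint that forces the mean of $F$ to shrink like $1/n$ --- i.e.\ you implicitly let $F$ vary with $n$ --- and then try to show that \eqref{co4c} and \eqref{co4d} become jointly infeasible as $n\to\infty$.

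That stronger claim is false, so the remainder of your argument cannot be completed. If $F$ is allowed to depend on $n$, take the no-information structure: each informed agent's posterior is the constant $(1-p_0^{A_1})/(n-1)$, so $F=\delta_{(1-p_0^{A_1})/(n-1)}$, which is independent, symmetric across the informed agents, satisfies \eqref{co4c}--\eqref{co4d}, and is feasible for every $n$. This is exactly the case your step (iii) cannot handle: for such $F$ there is no interval above the mean carrying $F$-mass on which $F$ is bounded away from $1$, since $F(x)=1$ for all $x$ at and above the atom; your parenthetical that uninformedness of agent $1$ ``prevents $F$ from degenerating'' only rules out $\delta_0$, not a point mass at the prior mean. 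So the deferred ``short case analysis'' is not a technicality --- it is where the approach fails. The fix is simply to use the intended fixed-$F$ reading, in which your opening inequality $(n-1)\mathbb{E}_F(x)\le 1$ already delivers the contradiction (modulo the trivial observation that $F=\delta_0$ is excluded, since then nothing is being claimed beyond a degenerate, always-feasible belief).
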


To see this result, pick $a_i=0$ for all $ i=2,\dots,n$ and $a_1=1$ in  \eqref{tm6}, we obtain  $(n-1) \mathbb{E}(x)\leq 1$. The inequality must be violated for $n$ sufficiently large. In particular, if $F$ is uniform, it is easy to verify that  condition   \eqref{tm6} holds when $n=3$ and fails for $n\geq 4$, i.e., the independent uniform posteriors are feasible only for three agents.  

\subsection{Correlation }  
 We now discuss several classes  of joint posterior beliefs allowing correlation. Interestingly, Proposition 2 with fixed marginals and variable number of agents generalizes to correlated beliefs.
  
 \begin{prop}Suppose agent $1$ is uninformed and  $\nu\in [0,1]^{n-1}$  has the symmetric marginals $\nu_i=F$  for $i=2,\dots,n$. Then
 
  (1) If the number of agents is sufficiently large ($n\to \infty$),  $\nu$ is not feasible. 
 
(2) In particular if $F$ is uniform, $\nu$ is not  feasible for  $n\geq 4$.
 
 \end{prop}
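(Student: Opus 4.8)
The plan is to reduce both parts to the single marginal inequality $(n-1)\,\mathbb{E}_F[x]\le 1$, and then to observe that this inequality involves $\nu$ only through its one-dimensional marginals, so the obstruction established for independent $\nu$ in Proposition 2 survives unchanged when correlation is allowed.

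First I would fold the uninformed agent into the profile: since agent $1$ observes nothing, its posterior for $A_1$ is the constant $p_0^{A_1}$, so the full joint law on $[0,1]^n$ is $\delta_{p_0^{A_1}}\otimes\nu$, and this law is $p_0$-feasible for some $p_0$ (with the $A_i$ pairwise disjoint as in Theorem 4) exactly when $\nu$ is. Next I would invoke the necessity direction of Theorem 4: this joint law satisfies the Border$^*$ inequalities, and in particular the martingale condition \eqref{tm1:2} reads $p_0^{A_1}+\sum_{i=2}^n\mathbb{E}_{\nu_i}[x_i]=1$. Using $\nu_i=F$ for $i=2,\dots,n$ together with $p_0^{A_1}\ge 0$, this gives $(n-1)\,\mathbb{E}_F[x]\le 1$. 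The same bound can be read off \eqref{tm1:1} with the test sets $C_1=\emptyset$ and $C_i=[0,1]$ for $i\ge 2$, whose right-hand side collapses to $\nu([0,1]^n)=1$, or simply off the martingale property $\mathbb{E}_{\nu_i}[x_i]=p_0^{A_i}$ of posterior beliefs together with $\sum_i p_0^{A_i}\le 1$. The essential point is that the joint distribution of $\nu$ never enters, only its marginals, so correlation is irrelevant to this constraint.

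Finally I would conclude. For (1): with $\mu_F:=\mathbb{E}_F[x]>0$ fixed, the bound $(n-1)\mu_F\le 1$ fails as soon as $n>1+1/\mu_F$, so $\nu$ is infeasible for all sufficiently large $n$. For (2): if $F$ is uniform on $[0,1]$ then $\mu_F=1/2$, the bound becomes $n-1\le 2$, i.e.\ $n\le 3$, and hence $\nu$ is infeasible for every $n\ge 4$.

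I do not expect a genuine technical obstacle here: once the uninformed agent is absorbed into the $n$-agent profile, the result is essentially one line from Theorem 4, and the content of the proposition is precisely the conceptual observation that the binding Border$^*$ constraint is marginal-only. Two points deserve to be spelled out in the final proof: (i) the reduced inequality \eqref{tm6} is not available here since it presupposes independence of $\nu$, so one must argue from \eqref{tm1:1} or \eqref{tm1:2} directly; and (ii) in part (1) the degenerate case $F=\delta_0$ — where agents $2,\dots,n$ are certain that $A_i$ does not occur and $p_0^{A_1}=1$ — is feasible for every $n$, so a mild non-degeneracy ($\mu_F>0$) should be assumed.
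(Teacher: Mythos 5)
Your proposal is correct and takes essentially the same route as the paper: both reduce the claim to the marginal-only bound $(n-1)\,\mathbb{E}_F[x]\le 1$ (the paper reads it off \eqref{tm6} with $a_1=1$, $a_i=0$ in the independent case and then notes the constraint involves only marginals, while you derive it directly from \eqref{tm1:1} with $C_1=\emptyset$, $C_i=[0,1]$, or from the martingale condition \eqref{tm1:2} — the right fix, since \eqref{tm6} presupposes independence), and then conclude exactly as the paper does for $n\to\infty$ and for uniform $F$ with $n\ge 4$. Your caveat that part (1) needs the mild non-degeneracy $\mathbb{E}_F[x]>0$ (e.g.\ $F=\delta_0$ is feasible for every $n$) is a fair point that the paper leaves implicit.
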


We next consider the class of three-agent problems where agents 1 and 2 has uniform marginals and agent 3 is uninformed, i.e.,  copulas (see \citealp{NEL07}). We show that the agents' beliefs cannot be too positively dependent. We  introduce the following classic definition of  positive quadratic dependence for bivariate distributions (\citealp{LE66}).
  
 \begin{df}Distribution $\nu\in [0,1]^2$ is positive quadratic dependent if  for all $a_1,a_2\in [0,1]$,
 \begin{equation}
 \nu(X_1\leq a_1,  X_2\leq a_2)\geq   \nu_1(X_1\leq a_1) \nu_2(X_2\leq a_2) 
 \end{equation}
 \end{df}

Consider a class of testing sets which require that for each agent $i $, the posterior belief for $\omega_i$ is greater or equal to $a_i\in [0,1]$:  
  \begin{equation}
C_i =\{ X_i\geq a_i\} 
 \end{equation}
Then the Border$^*$ inequalities in Theorem 4 imply  that for all $a_i\in [0,1] $,   the joint probability for $C_1^c\times  C_2^c$ is bounded above, which gives a bound on the joint distribution of posteriors. 
 
 \begin{prop}Suppose $\nu\in\Delta( [0,1]^2)$   has uniform marginals $\nu_i$ on  $[0,1]$.   If $\nu$ is $ p_0$-feasible for some $p_0$,  then for all $a_1,a_2\in [0,1] $,  
 \begin{equation}\label{eq:p9}
  \nu(X_1\leq a_1, X_2\leq a_2)\leq   \frac{\sum_i a_i^2}{2} 
  \end{equation}
 
 \end{prop}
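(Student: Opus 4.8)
The plan is to read the bound \eqref{eq:p9} straight off the Border$^*$ characterization in Theorem~4, by feeding a single family of upper-contour testing sets into \eqref{tm1:1}. Since $\nu$ is $p_0$-feasible for some $p_0$ and its marginals are uniform on $[0,1]$, Theorem~4 applies --- taking the two-agent, two-state formulation $\Omega=\{\omega_1,\omega_2\}$, $A_i=\{\omega_i\}$, or equivalently embedding the uninformed third agent with posterior $X_3\equiv 0$ and later choosing $C_3=\emptyset$ --- and yields that $\nu$ satisfies \eqref{tm1:1}. I would then instantiate \eqref{tm1:1} at $C_1=\{x_1\ge a_1\}=[a_1,1]$ and $C_2=\{x_2\ge a_2\}=[a_2,1]$, which gives
\[
\int_{a_1}^1 x_1\,d\nu_1(x_1)+\int_{a_2}^1 x_2\,d\nu_2(x_2)\ \le\ \nu\bigl((C_1\times[0,1])\cup([0,1]\times C_2)\bigr).
\]

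Next I would evaluate the two sides. On the left, uniformity of each marginal gives $\int_{a_i}^1 x_i\,d\nu_i(x_i)=\tfrac12(1-a_i^2)$, so the left-hand side equals $1-\tfrac12(a_1^2+a_2^2)$. On the right, a point $(x_1,x_2)$ fails to lie in $(C_1\times[0,1])\cup([0,1]\times C_2)$ exactly when $x_1<a_1$ and $x_2<a_2$, so this set is the complement of the rectangle $[0,a_1)\times[0,a_2)$; since the marginals are atomless, replacing these strict inequalities by weak ones does not change the $\nu$-probability, and the right-hand side is $1-\nu(X_1\le a_1,\,X_2\le a_2)$. Cancelling the $1$'s and rearranging then gives $\nu(X_1\le a_1,\,X_2\le a_2)\le\tfrac12(a_1^2+a_2^2)=\tfrac12\sum_i a_i^2$, which is \eqref{eq:p9}.

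Once Theorem~4 is in hand the argument is essentially immediate, so I do not expect a genuine obstacle; the only points that warrant a sentence of care are the identification of $\bigcup_{i}(C_i\times[0,1]^{n-1})$ with the complement of $[0,a_1)\times[0,a_2)$ and the (harmless, by atomlessness) passage between strict and weak inequalities in the contour sets. I would close with the observation that on the diagonal $a_1=a_2=a$ the bound reads $\nu(X_1\le a,\,X_2\le a)\le a^2$, i.e.\ the joint distribution function there cannot exceed its value under the independent coupling; this is the precise sense in which the Border$^*$ inequalities preclude strong positive dependence, and it is the natural comparison to draw with the positive quadratic dependence condition introduced just above.
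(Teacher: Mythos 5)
Your proof is correct and follows exactly the route the paper sketches immediately before the proposition: instantiate the Border$^*$ inequality of Theorem~4 at the upper-contour sets $C_i=\{X_i\ge a_i\}$, compute $\int_{a_i}^1 x\,dx=\tfrac{1}{2}(1-a_i^2)$ from the uniform marginals, identify the complement of $\cup_i(C_i\times[0,1]^{n-1})$ with the rectangle $[0,a_1)\times[0,a_2)$, and use atomlessness of the marginals to pass between strict and weak inequalities. The only cosmetic slip is calling the uninformed third agent's posterior $X_3\equiv 0$ (it is the constant $p_0(\omega_3)$), but since you choose $C_3=\emptyset$ this has no bearing on the argument.
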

We can show that for several classes of copulas, the bound  completely rules out positive quadratic dependent beliefs.  For illustration, consider the following 4 classes of copulas:
 
 (1)  $\nu(x_1,x_2)=\min\{x_1, x_2\}$. This copula   corresponds to the Fr\'echet-Hoeffding upper bound,  and in our  case it assumes that agents' beliefs are perfectly positively correlated. Then $\nu$ violates   condition \eqref{eq:p9}. This result verifies that perfect positive correlation is precluded by the no-agreement argument.  

 (2)  $\nu(x_1,x_2)=x_1x_2 +\theta x_1x_2(1-x_1)(1-x_2)$, where $\theta\in [-1,1]$. These correspond to the Farlie-Gumbel-Morgenstern family and is known to be positive quadratic dependent if   $\theta\geq 0$.  Then $\nu$ satisfies condition \eqref{eq:p9} if and only if $\theta\leq 0$, i.e., $\nu$ is independent or negatively dependent.

 (3)  $\nu(x_1,x_2)= (x_1^{-\theta }+x_2^{-\theta }-1)^{-\theta}$, where $\theta\in [-1,+\infty]$. These correspond to the Clayton family and is known to be positive quadratic dependent if   $\theta\geq 0$.  Then $\nu$ satisfies condition \eqref{eq:p9} if and only if $\theta\leq 0$, i.e., $\nu$ is independent or negatively dependent.

 (4)  $\nu(x_1,x_2)=\frac{x_1x_2}{1-\theta(1-x_1)(1-x_2)}$, where $\theta\in [-1,1)$. These correspond to the the Ali–Mikhail–Haq (survival) family and is known to be positive quadratic dependent if   $\theta\geq 0$.  Then $\nu$ satisfies condition \eqref{eq:p9} if and only if $\theta\leq 0$, i.e., $\nu$ is independent or negatively dependent.

For the last example,  we consider that agents 1 and 2 have a joint distribution of posteriors  uniform on the upper triangle
\begin{equation}
\Delta^*=\{(x_1,x_2)\in \R^2_+: x_1+x_2\geq 1,   x_1\leq 1,x_2\leq 1\},
\end{equation}
with a density function
 \begin{center}
 $f(x_1,x_2)=\left\{
\begin{array}{l l}
2 & \text{if}\,\,  (x_1,x_2)\in \Delta^*, \\
0  & \text{otherwise}.\,\,  \\ 
\end{array}%
\right. $
\end{center}
The joint support restriction to the upper triangle captures a notion of positive dependence of beliefs. The next result shows  that this distribution is not feasible.\footnote{Notice that the symmetric marginal density of $\nu $ is given by $f(x)=2x$ for all $x\in[0,1]$. Then the condition in Proposition 4 is given by $2\int_a^1 2x^2 dx\leq  1- (\frac{1}{2}(2a-1)^2) \mathbbm{1}_{a\geq 1/2}$, which is violated for $a\in (0.5, 0.66]$.}

\begin{prop}Suppose $\nu$ is  uniform on $\Delta^*$, then  $\nu$ is not $p_0$-feasible for any $p_0$. \end{prop}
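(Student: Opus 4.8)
I would invoke Theorem 4 with two agents whose (pairwise disjoint, covering) payoff-relevant events are $A_1=\{\omega_1\}$ and $A_2=\{\omega_2\}$: then $\nu$ is $p_0$-feasible for some $p_0$ precisely when it satisfies the Border$^*$ inequalities \eqref{tm1:1}--\eqref{tm1:2}, so it suffices to display one violated instance. The first step is to read the marginals of $\nu$ off the geometry of $\Delta^*$: the vertical section of $\Delta^*$ at $x_1=x$ is the segment $\{x_2\in[1-x,1]\}$, of length $x$, so (using the density $2$) the marginal density of $\nu_i$ is $f(x)=2x$, i.e.\ $\nu_i([0,a])=a^2$, for $i=1,2$.

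\noindent With the marginals in hand the claim is almost immediate: $\int_0^1 x\,d\nu_i(x)=\int_0^1 2x^2\,dx=\tfrac23$, so $\sum_{i=1}^2\int_0^1 x\,d\nu_i(x)=\tfrac43\neq1$, contradicting \eqref{tm1:2} (equivalently, taking $C_1=C_2=[0,1]$ in \eqref{tm1:1} would force $\tfrac43\le\nu([0,1]^2)=1$); hence $\nu$ is not $p_0$-feasible for any $p_0$. To obtain the same conclusion through the genuinely dependence-sensitive part of the characterization — which is the point the example is meant to illustrate — I would instead use the upper contour testing sets $C_1=C_2=[a,1]$: the left side of \eqref{tm1:1} becomes $2\int_a^1 2x^2\,dx=\tfrac43(1-a^3)$, while its right side is $\nu(\{x_1\ge a\}\cup\{x_2\ge a\})=1-\nu([0,a)^2)$, where $[0,a)^2\cap\Delta^*$ is empty for $a<\tfrac12$ and a right triangle with legs $2a-1$ for $a\ge\tfrac12$, whose $\nu$-mass is $(2a-1)^2$; thus \eqref{tm1:1} demands $\tfrac43(1-a^3)\le 1-(2a-1)^2\mathbbm{1}_{\{a\ge1/2\}}$, which fails on an interval of thresholds (already at $a=\tfrac12$ the left side is $\tfrac76>1$).

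\noindent There is no real obstacle here: once Theorem 4 is available everything reduces to elementary planar integration, and the only points needing a moment's care are computing the length of a section of $\Delta^*$ (hence the marginal density $2x$ and the mean $\tfrac23$) and the $\nu$-mass of the lower-left corner $[0,a)^2\cap\Delta^*$. I would check those two small computations and then conclude from any single admissible choice of testing sets at which \eqref{tm1:1} is violated — indeed $C_1=C_2=[0,1]$ already does it.
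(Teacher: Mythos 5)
Your proof is correct, and your second computation is essentially the paper's own argument (given in the footnote attached to this proposition): read off the marginal density $2x$, test the Border$^*$ inequality \eqref{tm1:1} on the upper-contour sets $C_1=C_2=[a,1]$, and exhibit a violation near $a=\tfrac12$. Your version is in fact slightly cleaner on one detail: the $\nu$-mass of $[0,a)^2\cap\Delta^*$ is $(2a-1)^2$ (area $\tfrac12(2a-1)^2$ times the density $2$), whereas the paper's footnote writes $\tfrac12(2a-1)^2$, an apparent area-versus-mass slip that does not affect the conclusion. What you add that the paper does not use is a genuinely simpler route: since each marginal has mean $\tfrac23$, the sum of expected posteriors is $\tfrac43$, so the martingale condition \eqref{tm1:2} (equivalently, \eqref{tm1:1} with $C_1=C_2=[0,1]$) already fails, with no need for the dependence-sensitive inequalities; this argument also survives any reframing with an extra uninformed agent, since then one would only need $\sum_i\mathbb{E}[x_i]\le 1$. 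The one caveat is interpretive rather than mathematical: your shortcut shows the example is excluded already by the total-probability constraint rather than by positive dependence per se, which is the phenomenon Section 5.2 is meant to illustrate, so the upper-contour computation (yours and the paper's) is the argument worth keeping in view.
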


\section{Relation to Literature}
Our paper  contributes to the literature on reduced-form mechanisms. \cite{MR84} and \cite{MA84} first studied reduced-form implementability conditions and \cite{BO91,BO07} proved a conjecture posed by Matthews, which is known as Border's theorem.   \cite{GK11, GK16, GK22}  developed a geometric approach for general social choice problems by support-functions.  \cite*{CKM13} proposed a network flow approach to reduced-form auction with paramodular constraints.  \cite{VO11} characterized reduced forms by polymatroid.  \cite{HR15} and  \cite*{KMS21} characterized symmetric reduced-form auctions by majorization inequalities. \cite{ZH21} and \cite{LY21} generalized reduced-form auctions to multiple goods and constraints.  \cite{LM23}  characterized symmetric reduced form voting rules with two alternatives. \cite*{TVV23} studied a reduced form approach to persuasion where the reduced-form variables are the probabilities with which the receiver takes each of her actions.

Our model provides a new belief-based characterization to reduced-form auctions. We derive a new version of Border's inequalities which characterize all feasible (first-order) posterior beliefs of agents about their payoff-relevant events. Our result formalizes a relationship between Border's theorem and Aumann's agreement theorem. In some sense, Border's theorem is equivalent to Aumann's theorem by appropriately choosing payoff-relevant events.  While our model considers auction-type reduced-form outcomes, our results provide a technique to establish a general equivalence between feasible posterior beliefs problems and reduced-form mechanisms for non-auction problems.

Our model is closely related to the recent literature on feasible  joint posterior beliefs about a common event. When the state space is binary,  \cite{ABST21}  provided a linear characterization of feasible (first-order) posterior beliefs and further proved that these linear inequalities lead  to a quantitative Aumann's agreement theorem. When the state space contains more than two states, \cite{ABST21} and \cite{MO20} provided a no-trade characterization for the problem of feasible joint posterior beliefs.  \cite{LA22} shows that different characterization conditions a la Aumann and Border correspond to different types of market games of bets, which depend on whether the agents making bets on the same events or distinct events and the numbers of buyers and sellers of bets.  While no-trade characterization requires a large number of  inequalities, our Border-like characterization is similar to indicator trading schemes in \cite{ABST21} and provide a tractable way to generalize their indicator trading scheme characterization  from two states to many states. Our paper also contributes to the  literature on common prior and no trade \citep{MS82,  MO94,SM98, FE2000}.   Our result complements the existing result by considering a new class of bets where agents bet on pairwise disjoint events.

The problem of feasible joint posterior beliefs is closely related to the growing literature on Bayesian persuasion started by  \cite{KG11} and  \cite{RS10}.  A special class  of private information is that the information available to each agent reveals nothing about the information available to her opponent. These private signals are literally private and is called  private private information structures   \citep*{HST21}. Private private signals  arise as the worst-case information structure  in some problems of robust auction design (see \cite*{BBM17}; \citealp{BD21}). In auction problems, private private signals are very desirable when the preservation of the buyers' privacy is a major concern.  Our belief-based characterization of Border's inequalities implies that even if there is no robustness concern,  private private signals can arise endogenously in some auction-type information design problems due to feasibility reasons.

\appendix

\section{Appendix}
\label{sec:app1}

    \subsection{Proof of Theorem 3}
     
To prove Theorem 3, we will use the following lemma \citep[Theorem 13.1.1]{DU04}.  Let $(X,\mathcal{F}) $ and $(Y,\mathcal{G})$ be measurable spaces. If $T: X\to Y$ is a measurable map and $\mu$ is a measure on $\mathcal{F}$, then  the pushforward measure     is defined by $T_{\#}\mu(A) =\mu (T^{-1}(A))$ for all $A\in \mathcal{G}$. We have the following lemma.
     
   \begin{lm}
  Let $(X, \mathcal{F}, \mu)$ be an atomless
probability space and let $(Y,\mathcal{G})$ be a Polish space (i.e., separable and  completely metrizable). Then for any probability measure $\nu$ on $(Y,\mathcal{G})$, there exists a measurable function $T: X\to Y$ such that $\nu= T_{\#}\mu$.
  \end{lm}
  \begin{proof}[Proof of Theorem 3]{\bf  Only if.} Suppose  $ \mu$ is an independent and  atomless measure on some type space $(T, \mathcal{\mathcal{F}})$.   The necessity that $\nu$ satisfies the Border$^*$  inequalities   follows from the proofs of Theorem 1.  To show that $\nu$ is independent,   notice that from  \eqref{eq:2} and independence of $\mu$, 
 for every measurable sets $  C_i\subseteq [0,1] $, $i\in N$,
      \begin{align*} 
 \nu (C_1\times\dots\times C_n)&=\mu(Q\in C_1\times\dots\times C_n)\\
&= \mu (Q_1\in C_1 ,\dots, Q_n\in C_n )\\
&  =\prod_{i\in N}\mu_i(Q_i\in C_i )\\
&=\prod_{i\in N}\nu_i( C_i )
    \end{align*}
where the second equality follows from $Q_i(t)=Q_i(t_i)$. Hence $\nu$ is independent.
   
{\bf  If.}  Suppose  $\nu$ is independent and satisfies the Border$^*$  inequalities. Then $\nu=\nu_1\times\dots\times \nu_n$ where each $\nu_i$ is a probability measure on $([0,1] , \mathcal{B})$ where $\mathcal{B}$ is the Borel $\sigma$ -algebra, which is Polish. Since $\mu$ is independent and atomless,  $\mu=\mu_1\times\dots\times \mu_n$ and each $\mu_i$ is atomless. By Lemma 2,  for each agent $i$,  there exists an $\mathcal{F}_i$-measurable function $Q_i: T_i\to [0,1] $ that pushes forward $\mu_i$  to $\nu_i$.  We define an interim  outcome function to be $Q=(Q_1,\dots,Q_n)$. Since $\nu$ satisfies  the Border$^*$  inequalities, by a change of measures we have $Q$ satisfies Border's inequalities. Then Lemma 1 implies that there exists an outcome function $a$ such that $Q$ is the reduced form.  Hence $ \nu$ is feasible for $(\mu,a).$
  \end{proof}

  \subsection{Proof of Theorem 4}

Before presenting the proof, the  following revelation principle  is useful for our analysis. We say an information structure is direct if for each $i$, $S_i=[0,1]  $ and $q_i (s_i)=s_i$ for all $s_i$.

\begin{lm}
 If $  \nu $ is  $p_0$-feasible for some information structure $I_0  $,  then there exists a direct information structure $I$ such that $  \nu $ is  $p_0$-feasible for $I$.
 \end{lm}
 
 \begin{proof}
 If $  \nu $ is  $p_0$-feasible for some $p_0$,   i.e., there exists an  information structure $I_0 =(S,P_0)$  such that $  \nu=  \nu^{I_0}$, then define a direct information structure $I =([0,1]^n, P)$ by $P(\omega, C)=P_0(\omega, q\in C)$ for every measurable $C\subset [0,1]^n$, we have $ \nu=  \nu^{I }$. 
 \end{proof}

 \begin{proof}[Proof of Theorem 4]{\bf  Only if.} Suppose $  \nu $ is feasible. From Lemma 3, there exists a direct information structure  $I =([0,1]^n,P)$ such that $ \nu=\nu^{I }$.  We have
  \begin{align}
   \sum_{i\in N} \int_{C_i}x_i  d\nu_i( x_i)=  &   \sum_{i\in N}  \int_{C_i}x_i d \nu^I_i( x_i) \\
    =&\sum_{i\in N} \int_{C_i  }  q_i^{A_i}(x_i) d \nu^I_i( x_i)\\
  =&\sum_{i\in N} \int_{C_i  }  P(A_i| x_i) d P( x_i) \\
\leq &  \sum_{i\in N}  \int_{\cup_{i\in N} (C_i \times [0,1]^{n-1})} d P(A_i,  x) \\
= & \nu \left(\cup_{i\in N} (C_i \times [0,1]^{n-1}) \right)
\end{align}
where the second equality follows from that $I$ is direct and  the last  equality follows from that  $A_1,\dots,A_n$  is a partition of $\Omega$, and hence  $\sum_{i\in N}P(A_i,  B)= P(B)= \nu(B)$ for all measurable $B\subset [0,1]^n$.
 
{\bf  If.} Suppose $\nu$ satisfies
 \eqref{tm1:1}.   We construct a direct information structure that gives $\nu$ as posteriors. The proof is based on the construction of an  equivalent reduced-form auction problem. 
 Define $D=\{ A_1,\dots, A_n\}$ a set of possible allocations of an object for agents. Define for each agent $i$ type space $ T_i=[0,1] $ with the Borel $\sigma$-algebra. Define $T$ as the product type space and the prior probability measure $\mu=\nu$ on $T$. Define  $Q_i^{A_i}(t)=t_i$.  $\nu$ satisfies \eqref{tm1:1} implies that $Q$ satisfies Border's condition. By Border's theorem, there exists an outcome function $a:T\to \Delta(D)$ such that $Q$ is the reduced form.  
 
 Now define an information structure $(S,P)$ where $ S=T$ and $dP(\omega, t)=\frac{1}{|A_i|}a(A_i,t)d\mu(t)$ for every $\omega\in A_i$ and every $i$.  Then $P\in \Delta(\Omega\times S)$.  By construction
\begin{equation}
 \int_{ T_{-i}}d P(A_i, t_{-i} |t_i)=\int_{  T_{-i}}a(A_i,  t_{-i}, t_i)d\mu_i(  t_{-i}|t_i)= Q_i^{A_i} (t_i)=t_i  
\end{equation}
That is, $Q$ is the posteriors induced by $(S,P)$.  By construction for each $B\subset [0,1]^n$,   \begin{equation}
\nu(B)=  \mu(B)=  P(Q\in B)\end{equation}
Then    $\nu$ is the distribution of posteriors induced by $(S,P)$.    \end{proof}
 
 \subsection{Proof of Theorem 5}

   \begin{proof}[Proof of Theorem 5]{ \bf Only If.} Suppose $ \mu$ is independent, $m$ is a BIC (or DIC) auction and $\nu$ is feasible for the auction problem $(\mu,m)$. A similar analysis as in Theorem 1 implies that $\nu$ satisfies Border$^*$ inequalities \eqref{tm1:1}    and \eqref{tm1:2}.   Also $\mu$  is independent implies that $\nu$ is independent. So  $\nu$ satisfies \eqref{tm6} for all $C_i\subseteq [0,1]$  and hence for all $C_i=\{x_i\in [0,1]: x_i\geq a_i\}$ with $a_i\in [0,1]$.

{ \bf  If.} We  show that when  $\nu$ is independent and satisfies \eqref{tm1:2} and  \eqref{tm6}, then $\nu$ is feasible an for some independent prior and some BIC (or DIC) auction.  The construction is  similar to Theorem 1. Given the type space $T=[0,1]^n$, define a prior probability measure $\mu$ by  $\mu(C)=\nu(C) $ for all  measurable $C\subset  [0,1]^n$.  Define $Q:[0,1]^n\to [0,1]^n$ by  $Q_i(x)=x_i$ for  all  $x \in [0,1]^n$ and $i\in N$. Then $Q$ is an interim outcome function. By construction  $Q$ pushes forward  $\mu$ to $\nu$. Moreover, each $Q_i$ is nondecreasing.

Pick any profile of testing sets $(C_i)_{i \in N}$ with $C_i=\{x_i\in [0,1]: x_i\geq a_i\}$. Since $ Q_i(x_i)=x_i$,    $\nu$ satisfies condition  \eqref{tm6} for  $(C_i)_{i \in N}$ implies that $Q$ satisfies  \eqref{eq:BO1} for $(E_i)_{i \in N}$ with $E_i=\{x_i\in [0,1]: x_i\geq a_i\}$.  From Border's theorem with independent priors and monotonicity constraints  \cite[Corollary 2]{CKM13},  there exists a BIC auction $a: T\to \Delta(N)$ that implements $Q$. From Theorem 2 in \cite{MV10}, we can make $a$ into a DIC auction. Hence we find an independent prior  $\mu $   and a BIC (or DIC) auction $a$ that generate $\nu$, i.e.,   $\nu$ is feasible for a BIC (or DIC) auction problem. \end{proof}

  \subsection{Proof of Theorem 6}

The following theorem in optimal transport establishes the existence of a pushforward (i.e.,a transport plan) with a monotonicity property.

  \begin{lm} Brenier's theorem \citep[Theorem 2.12]{VI03}  If  $\mu$ and  $\nu$ are two probability measures on  $\R^n$  with $ \mu$ absolutely continuous with respect to the Lebesgue measure, then there exists a unique   map $T : [0,1]^n\to [0,1]^n$ that pushes forward $\mu$  to $\nu$ and $T=\nabla \phi$ with $\phi$ convex.
   \end{lm}
   
   \Xomit{
   let  $F$ and $G$ be the distribution functions of $\mu$ and $\nu$, $ G^{-1}\circ F$ is the Brenier map that pushes foward $\mu$ to $ \nu$, where $G^{-1}$
is the generalized inverse of $G$ on $[0, 1]$, i.e., $G^{-1}(a)=  \inf\{x\in \R : G(x)> a\}$. Then $G^{-1}\circ F$ is monotone.}

    \begin{proof}[Proof of Theorem 6]{\bf  Only if.} Suppose  $ \mu $ is a measure on $[0,1]^n$ that is independent and  absolutely continuous.   The necessity that $\nu$ satisfies   \eqref{tm1:2} and  \eqref{tm6} and is independent follows from the proof of Theorem 5.   
   
{\bf  If.}  Suppose  $\nu$ is independent and satisfies  \eqref{tm1:2} and  \eqref{tm6}.   Since $\mu$ is independent and absolutely continuous,   each $\mu_i$ is absolutely continuous. Using Lemma 4 for $n=1$,  for each agent $i$,  there exists a measurable function $Q_i: [0,1]\to [0,1] $ that   pushes forward   $\mu_i$ to $\nu_i$ and $Q_i$ is non-decreasing. We define an interim  outcome function to be $Q=(Q_1,\dots,Q_n)$. Since $\nu$ satisfies \eqref{tm1:2} and  \eqref{tm6}, by a change of variables we have $Q$ satisfies Border's inequalities. Then Lemma 1 implies that there exists an outcome function $a$ such that $Q$ is the reduced form.  Hence $ \nu$ is feasible for $(\mu,a).$
  \end{proof}

\Xomit{

  \section{No-trade interpretation}
  
In this section, we provide an interpretation of Theorem 2  with a  no-trade approach.      \cite{MO20}'s Theorem 3 provides a no-trade characterization (Morris's condition thereafter) that is necessary and sufficient for a joint distribution of the first-order beliefs consistent with a common prior, using the notion of $\Omega$-measurable,  zero-value trades.    Below we show that by restricting zero-value trades to a specific class of bets, Morris's condition further reduces to the Border-like characterization in  Theorem 2.

 We first reformulate Morris's Theorem 3 in   the model with information structures.  Note that Morris considers a coarsened type space where the type space coincides the first-order beliefs and provides a characterization of  a common prior   with the type space  being first-order beliefs.  It  can be seen that the first-order beliefs  consistent  with a common prior correspond to those that can be induced by an information structure.  
 
Following Morris, we consider beliefs with finite supports. Let $F_i\subset \Delta(\Omega)$ be a finite set of posterior beliefs for agent $i$ and denote $F=\times_{i\in N} F_i$. Define $F_i^*$ the projection of $F_i$ onto $\{\Omega_i\}$, i.e.  $F_i^*=\{x_i\in [0,1]: \exists f_i\in F_i\,\,\text{ such that }\,\, x_i=f_i^{\omega_i}\}$. Denote $F^*=\times_{i\in N} F_i^*$.

  \begin{df} {\bf ($\Omega$-measurable zero-value trade)}  (1) A trade is a collection $(x_i)_{i\in N}$ with  each $x_i: F\times \Omega \to \R$.  (2)   A trade $(x_i)_{i\in N}$ is $\Omega$-measurable if $x_i$ is measurable with respect to $(f_i,\omega)$, i.e., $x_i: F_i\times \Omega \to \R$.  (3)   $(x_i)_{i\in N}$ is a zero value trade (for agents 1,\dots,$n$)
if
 \begin{equation}
 \sum_{\omega\in \Omega} f_i^{\omega} x_i(f_i,\omega)=0
 \end{equation}
  for all $f_i\in F_i$ and $i\in N$.
  \end{df}
    
  \begin{lm} (\citealp{MO20}, Theorem 3)   Let $\Omega=\{\omega_0,\omega_1,\dots,\omega_m\} $ and $N=\{1,\dots,n\}$.
  Distribution  $\phi\in\Delta(F)$ is $p_0$-feasible for some $p_0$ if and only if, for every  $\Omega$-measurable, zero-value trade $(x_i)_{i\in N}$,   the following no-trade condition holds
\begin{equation}\label{eq:gen}        
\sum_{f\in F}\phi(f) \max_{\omega \in \Omega} \sum_{i\in N} x_i(f_i, \omega)\geq 0.
\end{equation}
 
  \end{lm}

For binary states,  \cite{MO20} constructs a class of $\Omega$-measurable, zero-value trades   which further reduce condition \eqref{eq:gen} to Theorem 3 in  \cite{ABST21}. For this class of trades,  all of the agents are betting on the same event. We further investigate this approach by considering  agents' bets on the disjoint events to obtain Border's theorem.

Note that  every $\Omega$-measurable, zero-valued  trade  $(x_i)_{i\in N}$  can be written as  $x_i(f_i,\omega)= y_i(f_i,\omega)  a_i(f_i)$  for some $y_i(f_i,\omega) \in \R$ and   $a_i(f_i)\in \R$,  where $y_i$ represents the mediator's bet offering to agent $i$ and $a_i $ denotes  agent $i$'s feasible trading actions given her belief,   including not bet ($a_i=0$), buy  $|a_i|$ units of bet $y_i$ ($a_i>0$) or sell  $|a_i|$ units  of bet $y_i $ ($a_i<0$). Then $y_i$ and $a_i$ can be interpreted as trading strategies of the mediator and the agents. Below we denote  a simple trade by $(x_i)_{i\in N} =((y_i),(a_i))_{i\in N}$. 
  
 We next introduce the notion of simple trades where each of the mediator's bets stipulates only one event, i.e., simple bets.   Simple trades preclude  bets on multiple events.

  \begin{df} {\bf (simple trade)}  An $\Omega$-measurable, zero-value trade $(x_i)_{i\in N} =((y_i),(a_i))_{i\in N}$  is simple, if  $y_i$ satisfies
 $$ y_i (f_i, \omega)= \left\{
\begin{array}{l l}
y_i^1(f_i)   & \text{if}\,\,  \omega\in A_i, \\
y_i^0(f_i)& \text{if}\,\,   \omega\notin A_i,  \\
\end{array}%
\right. $$
 for some $A_i\subset \Omega$ and $y_i^0, y_i^1: F_i\to \R $.
 \end{df}

 Somewhat surprisingly, for the class of simple trades where the agents bet on the distinct states, i.e. $A_i=\{\omega_i\}$,  condition \eqref{eq: gen1} in Lemma 2 then reduces to a characterization similar to the support function description of Border's theorem (see \citealp{GK11}).

\begin{prop} {\bf (Bets on the distinct events)}   Let $\Omega=\{\omega_0,\omega_1,\dots,\omega_n\} $ and $N=\{1,\dots,n\}$.    Distribution  $\phi\in \Delta(F)$ is $p_0$-feasible for some $p_0$  if and only if   for all  $a_i:F_i\to \R$,  $i\in N$,
\begin{equation} \label{eq: gen1}
\sum_{f\in F}\phi(f)(\max\{ 0,   a_1(f_1),\dots,a_n(f_n) \}-\sum_{i\in N}  f_i^{\omega_i}a_i(f_i))\geq 0. 
  \end{equation}

  \end{prop}
  
We note that  Proposition 1 is  different from Theorem 2 as  $\phi$ is defined over some $ F\subset  \Delta(\Omega )^n$ while $\nu$ in Theorem 2 (when supports are finite) is  defined over some $F^* \subset [0,1]^n $.  To obtain this result from Theorem 2, we show that an extension from $F^* $ to $F$ exists.  

\begin{proof} [Proof of Proposition 1](Only If) First suppose that $(x_i)_{i\in N}=((  y_i),(a_i)) _{i\in N}$ with $y_i=( y_i^0, y_i^1)$ is a simple trade.  The zero-valueness and $a_i$ being arbitrary imply that all $( y_i^0, y_i^1)$ in the cone $\{(-\sum_{\omega\in A_i}f_i^{\omega}, 1-\sum_{\omega\in A_i}f_i^{\omega}) \mu$: $\mu\geq 0\}$ are in the same equivalence class of simple trades.  Setting $\mu=1$ is without loss. That is, if  $(x_i)_{i\in N}$ is a simple trade, then  $$ y_i (f_i, \omega)= \left\{
\begin{array}{l l}
1-\sum_{\omega\in A_i}f_i^{\omega}   & \text{if}\,\,  \omega\in A_i, \\
-\sum_{\omega\in A_i}f_i^{\omega}& \text{if}\,\,   \omega\notin A_i, \\
\end{array}%
\right. $$
for some $A_i\subset \Omega$.

Define $N(\omega)=\{i\in N: \omega\in A_i\}$. For each $\omega\in \Omega$ and $f\in F$, we have
\begin{align} 
 \sum_{i\in N}x_i(f_i, \omega)&= \sum_{i\in N(\omega)} (1-\sum_{\omega\in A_i}f_i^{\omega} )a_i(f_i)+   
 \sum_{i\notin N(\omega)}(-\sum_{\omega\in A_i}f_i^{\omega})a_i(f_i)  \notag\\
 &= \sum_{i\in N(\omega)} a_i(f_i)  -\sum_{i\in N} \sum_{\omega\in A_i}f_i^{\omega}a_i(f_i).
  \end{align}
From Lemma 2, substitute the above equality  into condition \eqref{eq:gen} we get the following condition: For all $A_i\subset\Omega$, all  $a_i:F_i\to \R$,  $i\in N$,
\begin{equation} \label{eq: gen2}
\sum_{f\in F}\nu(f)(\max_{\omega\in \Omega} \sum_{i\in N(\omega)} a_i(f_i) -\sum_{i\in N} \sum_{\omega\in A_i}f_i^{\omega}a_i(f_i)) \geq 0. 
  \end{equation}
 
 For every simple trade where all agents bet  on different states, i.e.   $A_i=\{\omega_i\}$, we have
 $$ N(\omega)= \left\{
\begin{array}{l l}
\{i\} & \text{if}\,\,  \omega= \omega_i, \\
 \emptyset  & \text{if}\,\, \omega= \omega_0. \\
\end{array}%
\right. $$
Then from \eqref{eq: gen1} we obtain the condition in the Proposition.

(If) We   show that if $\nu$ satisfies  condition \eqref{eq: gen1}  for all simple trades required in the Proposition, then $\nu$ is $p_0$-feasible for some $p_0$. Define the projection of $\nu$ onto $\Omega^*$ by
 \begin{equation} 
  \nu^{*}(x_1,\dots,x_n )=\nu(f_1^{\omega_1}=x_1,\dots,f_n^{\omega_n}=x_n )
  \end{equation}
 and define $\nu^{*}_i$  the marginal of $\nu^{*}$ for agent $i$. 
Since condition  \eqref{eq: gen1}  holds for  all   $a_i:F_i  \to \R$, $i\in N$,   it implies that for all  $a_i$ measurable with respect to $F_i^*$, in particular for all  $a_i:F_i^*\to \{0,1\}$,  we have
\begin{equation}  \label{C4}
\sum_{i\in N}  x_i a_i(x_i )\nu^*_i (x_i )   \leq \sum_{x\in F^*}\nu^{*} (x)    \max\{ 0,  a_1(x_1),\dots,a_n(x_n) \}.
   \end{equation}
By Theorem 2,  condition \eqref{C4} implies that there  exists a (direct) information structure $P\in \Delta( \Omega\times S)$ with $S =F^*$ such that $P^S= \nu^{*} $. We now extend $P$ to a probability  $\pi\in\Delta(\Omega\times F)$. For any $f^*\in F^*$ and $f\in F$ with $(f_1^{\omega_1}, ..., f_n^{\omega_n})=f^*$,    define $f\setminus  f^*$ the coordinates of $f$ not in $\Omega^*$. Define
 
    \begin{equation}
\mu(f\setminus  f^* )=    \nu(f)/ \nu^{*}(f^*),
    \end{equation}
    and
     \begin{equation}
\pi(\omega, f)=P(\omega,f^*)\mu(f\setminus f^* ).
    \end{equation}
Then we obtain that $\pi$ is the required information structure. This proves that   $\nu $ is $p_0$-feasible for some $p_0$.    
    \end{proof}

We next restrict  simple trades to the market games where agents are   buyers and sellers of bets and the trading unit for each agent is either 0 or 1. The market games are much simpler than simple trades but remain a large class of trades.  Compared to  the simple trades which requires checking an  infinite number of inequalities, the market games require  finitely many inequalities.   
    
   \begin{df}  {\bf (market game)}   A  simple trade $(x_i)_{i\in N}$ with  $(x_i)_{i\in N} =((y_i),(a_i))_{i\in N}$  is a market game, if  (1) for each agent $i \in N$,  the agent is either a buyer ($a_i\geq 0$ for all $f_i\in F_i$) or a seller ($a_i\leq 0$ for all $f_i\in F_i$), and (2) the trade is unitary, i.e.,  $a_i(f_i)\in \{-1,0,+1\}$.  
  \end{df}
  
 The market games generalize  the indicator trading schemes introduced by \cite{ABST21}.
Intuitively, a market game consists a set of pure buyers and sellers of simple bets but not resellers and the trading units of each agent are restricted to be zero or one.  It is immediate that in a market game, if agent $i$ is a buyer, then
 $a_i=\mathbbm{1}_{  C_i}$  for some $C_i \subseteq F_i$, and if  agent $i$ is a seller  then $a_i=-\mathbbm{1}_{  C_i}$  for some $C_i \subseteq F_i$.

The next result shows  that for the class of  market games  with  $n $ buyers and no seller, condition \eqref{eq: gen1} reduces  to the Border-like condition in Theorem 2.

\begin{corollary}  {\bf ($n$-buyers market games)} The condition \eqref{eq:gen} for  market games where each agent $i\in N$ acts as a buyer reduces to the condition in Theorem 2.
 \end{corollary}
 
 For comparison, we also present the result for two-agent binary-state posterior feasibility problem in \cite{ABST21}.
 
\begin{corollary}  {\bf (1 buyer 1 seller market games)} The condition \eqref{eq:gen} for  market games where one of agents  1 and 2 acts as a seller and the other acts as a buyer  reduces to the Border-like condition in \cite{ABST21} Theorem 2.
 \end{corollary}  
}

\end{document}